\documentclass[journal]{IEEEtran}
\usepackage[T1]{fontenc}
\usepackage[latin9]{inputenc}
\usepackage{amsthm}
\usepackage{amsmath} 
\usepackage{graphicx} 
\usepackage{color}
\usepackage[usenames,dvipsnames]{xcolor} 
\usepackage{cite}
\usepackage{algpseudocode}
\usepackage{algorithm}
\usepackage{amssymb}
\usepackage{subfigure}
\usepackage{esint}
\usepackage{algpseudocode}
\usepackage{epstopdf}
\usepackage{multirow}
\usepackage{makecell}
\usepackage{float}
\usepackage{lipsum}
\usepackage{balance}
\usepackage{ulem}
\usepackage{bbm}

\newtheorem{definition}{Definition}
\newtheorem{remark}{Remark}

\newtheorem{example}{Example}

\theoremstyle{plain}
\theoremstyle{plain}
\newtheorem{theorem}{Theorem}
\newtheorem{lemma}{Lemma}

\newcommand{\comment}[1]{}

\IEEEoverridecommandlockouts

\definecolor{green}{rgb}{0.13, 0.55, 0.13}
\definecolor{brown}{rgb}{0.6, 0.2, 0.0}

\widowpenalty10000
\clubpenalty10000

\begin{document}

\title{LOCO Codes Can Correct as Well: Error-Correction Constrained Coding for DNA Data Storage
}

\author{
   \IEEEauthorblockN{Canberk \.{I}rima\u{g}z{\i} and Ahmed Hareedy, \IEEEmembership{Member, IEEE}} 
   
   \thanks{This work was supported in part by the T\"{U}B\.ITAK 2232-B International Fellowship for Early Stage Researchers. 

Canberk \.{I}rima\u{g}z{\i} is with the Institute of Applied Mathematics (IAM), Middle East Technical University, 06800 Ankara, Turkey (e-mail: canberk.irimagzi@metu.edu.tr).

Ahmed Hareedy is with the Department of Electrical and Electronics Engineering, Middle East Technical University, 06800 Ankara, Turkey (e-mail: ahareedy@metu.edu.tr).}
}
\maketitle

\vspace{-5.0em}
\begin{abstract}
As a medium for cold data storage, DNA stands out as it promises significant gains in storage capacity and lifetime. However, it comes with its own data processing challenges to overcome. Constrained codes over the DNA alphabet $\{A,T,G,C\}$ have been used to design DNA sequences that are free of long homopolymers to increase stability, yet effective error detection and error correction are required to achieve reliability in data retrieval. Recently, we introduced lexicographically-ordered constrained (LOCO) codes, namely DNA LOCO (D-LOCO) codes, with error detection. In this paper, we equip our D-LOCO codes with error correction for substitution errors via syndrome-like decoding, designated as residue decoding. We only use D-LOCO codewords of indices divisible by a suitable redundancy metric $R(m) > 0$, where $m$ is the code length, for error correction. The idea is that the residue, which is the index modulo $R(m)$, of the received word index equals the residue of the index error, i.e., the index difference. We find an exhaustive list of index differences due to single-substitution errors, and we are able to recover the index of the original codeword from the residue of the received word index. This requires storing a table for index errors and their residues. Having this decoding algorithm in hand, we provide the community with a construction of constrained codes forbidding runs of length higher than fixed $\ell \in \{1,2,3\}$ and $GC$-content in $\big [0.5-\frac{1}{2K},0.5+\frac{1}{2K}\big ]$ that correct $K$ segmented substitution errors, one per codeword. We call the proposed codes error-correction (EC) D-LOCO codes. We also give a list-decoding procedure with near-quadratic time-complexity in $m$ to correct double-substitution errors within EC D-LOCO codewords, which has $> 98.20\%$ average success rate. The redundancy metric is projected to require $2\log_2(m)+O(1)$-bit allocation, i.e., $2\log_2(m)+O(1)$ reduction in message bits, for a length-$m$ codeword. Hence, our EC D-LOCO codes are projected to be capacity-approaching with respect to the error-free constrained system.

\end{abstract}

\begin{IEEEkeywords}
LOCO codes, syndrome decoding, error correction, substitution, DNA data storage.
\end{IEEEkeywords}

\begin{table*}[ht]
\vspace{-1em}
\caption{The Conceptual Correspondence Between Varshamov-Tenengolts Codes and Error-Correction DNA-LOCO Codes}
\centering
\scalebox{0.95}
{
\begin{tabular}{|c|c|c|c|c|} 
\hline
\shortstack{\\ Code} & \textrm{Supercode} & \textrm{Syndrome} & \textrm{Redundancy metric} & \textrm{Correction type} \\
 \hline
\shortstack{VT code \\ \textup{ }} & \shortstack{$\{0,1\}^m$ \\ \textup{ }} & \shortstack{$\sum_{i = 0}^{m-1}  i\cdot c_i$ \\ \textup{ }}& \shortstack{$m+1$ \\ \textup{ }} & \shortstack{\\ Single deletion or \\ single insertion} \\
 \hline
\shortstack{\\\\ EC D-LOCO code} & $\mathcal{D}_{m, \ell}$ & \hspace{0.15cm} \textrm{formal index} \hspace{0.15cm} & $\overset{\textup{ }}{\underset{\textup{ }}{R(m)}}$ & \textrm{Single substitution} \\
\hline
\end{tabular}}
\label{table:VT}
\vspace{-1em}
\end{table*}

\section{Introduction}\label{sec:intro}

DNA data storage is a promising candidate for the next-generation storage technology, which is required to address the needs posed by the ever-increasing volume of data produced in our world \cite{goldman}, \cite{grass_robust}, \cite{blawat}, \cite{erlich_fountain}, \cite{organick_etal}, \cite{wang_etal19}, \cite{milenkovic_rw2}, \cite{DNA_engine}, \cite{ding_etal}, \cite{DNAprint}. As a field rich with problems stemming from practical challenges, DNA data storage has attracted significant attention from coding theorists, where they can use their solid foundations to effectively tackle these problems and bring this technology to life. These challenges are primarily related to how to reliably perform writing (synthesis), storing, and reading (sequencing) information. One of these coding-theoretic problems \cite{cai_optimal}, \cite{nguyen_etal} is the following \cite{ross_etal}, \cite{schwartz_etal}:
\\
\textbf{Problem.} Construct error-correcting, balanced constrained codes over the DNA alphabet $\{A,T,G,C\}$ that 

\begin{itemize}
\item[(P1)] eliminate homopolymer runs of length more than certain minimum \cite{ross_etal} and 

\item[(P2)] correct the types of errors that occur in the DNA data storage channel, i.e., substitution, deletion, and insertion errors \cite{olgica_nano}, \cite{heckel_etal}.

\end{itemize}
Here, balance, in a less strict sense, refers to the property of DNA strands having global $GC$-content close to $50\%$. It is important to address this coding problem while offering affordable, complexity-wise, encoding-decoding algorithms and high code rates.\footnote{For all the rates in this paper, the unit is bits per nucleotide (DNA symbol).}

There are very few papers in the literature attempting to achieve all three objectives, no long homopolymer runs, $GC$ balance, and error correction, using the same class of codes. Nguyen et al. \cite{nguyen_etal} offer a construction of constrained codes (i) forbidding homopolymer runs of length $> 3$, (ii) with $GC$-content in $[0.5-\epsilon, 0.5+\epsilon]$, and (iii) that can correct a single edit (i.e., a single substitution, or a single deletion, or a single insertion). They encode $[2n-2(r_{\mathrm{RLL}} + r_{\epsilon} + 4)]$-bit binary data into a DNA sequence of length $n + 2\log_2(n)+ c$, where $r_{\mathrm{RLL}} = n - \lfloor \log_4(N) \rfloor$, $r_{\epsilon} = 2 \,\lceil \log_4(\lfloor 1/2\epsilon \rfloor +1) \rceil$, and $c$ is a constant independent of $n$. Here, $N$ is the cardinality (see \cite[Equation~(1)]{immink_cai}) of all $4$-ary length-$n$ constrained sequences forbidding runs of length $> 3$. Their encoding-decoding algorithms have quadratic time complexity in $n$ and low storage overhead (see Table~\ref{table:cmp}). Additionally, they offer linear-time encoding-decoding algorithms under a relaxed homopolymer constraint, allowing the maximum run-length to be $4$. In \cite{ding_etal}, Ding et al. show promising results of this construction when Illumina sequencing is adopted.

The literature has probabilistic error-correction constrained coding schemes for DNA data storage, where the guarantees on correction are provided by the average; see \cite{hedges}, \cite{wu_1}, \cite{dna_aeon}, \cite{wu_2}, \cite{wu_3}, \cite{parkpark} in addition to the first eight references. It also has error-correction non-constrained (or probabilistically constrained) coding schemes addressing (P2); see \cite{chandak_nano}, \cite{sima1}, \cite{sima2}, \cite{cai_seqrec}, \cite{song_delsub}, \cite{xing2del}, \cite{serge}, \cite{song_burst}, \cite{burst_ge}, and \cite{substring_farnoud}. Moreover, there are works offering balanced constrained codes addressing (P1); see \cite{wang_etal}, \cite{park_lee_no}, \cite{he_liu_wang_tang}, and our work \cite{D-LOCO}. The work in \cite{D-LOCO} is based on the theory of lexicographically-ordered constrained (LOCO) codes \cite{ahh_general}. A design of constrained codes preventing error-prone patterns for nanopore sequencing is offered in \cite{const_ONT}.

\subsection{Our contribution} \label{sec:contrib}

In this paper, we give a partial solution to the aforementioned problem in the introduction where we attempt to achieve all objectives with one class of codes, focusing only on substitution errors (most common of all three in (P2) for Illumina sequencing \cite{organick_etal}). Building on \cite{D-LOCO}, we  propose a systematic design of DNA sequences forbidding runs of length higher than fixed $\ell$ in $\{1, 2,3\}$ and having $GC$-content in $\big [0.5-\frac{1}{2K},0.5+\frac{1}{2K} \big]$ that correct $K$ segmented substitution errors, one per codeword. These DNA sequences are formed by concatenating $K$ error-correction (EC) DNA LOCO (D-LOCO) codewords, each of which has guaranteed correction capability of a single-substitution error. 

Our main contribution is to formally extend the encoding-decoding rule of D-LOCO codes to the set $\{A, T, G, C\}^m$. This rule satisfies a property that distinct D-LOCO codewords have distinct indices, though this is not the case for words $\in \{A, T, G, C\}^m$ in general (a weaker version of the confusability property in \cite{sima_syndrome}). Similarly to the novel technique of syndrome compression by Sima et al. \cite{sima_syndrome}, we search for minimal redundancy to guarantee the correction of single substitutions. Instead of appending syndromes to codewords for correction, we adopt a subcode perspective more suitable for the constrained setting. Via an effective use of bridging symbols, we remedy this lack of rule bijection for words $\in \{A, T, G, C\}^m$, allowing us to correct received words with forbidden patterns. Our construction fits into the framework of Varshamov-Tenengolts (VT) codes~\cite{VT}, which are the sets of length-$m$ binary sequences with associated integer syndromes of fixed residue modulo $m+1$ (see Table~\ref{table:VT}). Below are the major steps in our theoretical contribution:

\begin{enumerate}
\item \textit{Define the code.} We define an EC D-LOCO code as a subcode of the corresponding D-LOCO code, consisting of codewords whose indices are divisible by a suitable redundancy metric $R(m)>0$, where $m$ is the code length, for fixed $\ell$. Let the \textit{residue} of an integer be this integer modulo $R(m)$, and let the index of any received word be the result of passing this word by the D-LOCO decoding rule. For EC D-LOCO codewords, the residue of the received word index is equal to the residue of the index error, i.e., the difference between the received word index and the error-free codeword index. 

\item \textit{Find a working redundancy.} We theoretically find a set of index errors, exhausting all index differences due to single substitutions, for general $m$, and algorithmically search for the smallest redundancy metrics which guarantee the recoverability of index errors from their residues. 

\item \textit{Improve for lower storage and enhanced correction capability.} The residue decoding algorithm for EC D-LOCO codes relies on a table of index errors and their residues and is designed in order to address double-substitution errors per segment as well with high detection/correction success rate (with probability $99.95\%/98.20\%$) at no rate cost. The algorithm runs in quadratic (at worst cubic) time for single-substitution (double-substitution) errors per segment, and it requires a storage overhead of less than $6$ kilobytes (kB) for moderate-to-high code lengths after an optimization.\footnote{We adopt the convention that a kilobyte (kB) is $8000$ bits.}
\end{enumerate}

EC D-LOCO codes are expected to be capacity-approaching as they incur only $2\log_2(m)+O(1)$-bit subcode redundancy due to $R(m)$ in the message length (see Remark~\ref{rem:redundancy}). They also offer low error-propagation, reconfigurability, and local balancing, which are intrinsic to all LOCO codes (see \cite[Section~VII C--E]{D-LOCO}). The first error-correction idea for LOCO codes appeared in~\cite{ahh_loco}, where the codewords of some binary LOCO code, encoding parity bits of a low-density parity-check (LDPC) code, are corrected in case a forbidden pattern shows up after transmission. This is done by first (i) flipping a suitably chosen bit of a received word, and then (ii) feeding the adjusted word to 
an LDPC decoder along with the bits of the input data. In this paper, we offer the first systematic framework to do that, which opens the door for similar efforts in other applications, regardless of the type of the data encoded and whether received words are free of forbidden patterns or not. In particular, residue decoding applies to any LOCO codes (see \cite{ahh_loco}, \cite{bd_tdmr}, \cite{ahh_mdl}, \cite{ahh_ps}, \cite{dua_tdmr}, \cite{goca_tdmr}) designed via the general method in \cite{ahh_general}, equipping them with substitution error correction for various applications.

The rest of the paper is organized as follows. In Section~\ref{sec:prelim}, we recall the encoding-decoding rule of D-LOCO codes that returns the index of a given codeword. We also give an illustration of how indices are affected by single-substitution errors. Index errors are studied in Section~\ref{sec:index_error_ell1} for $\ell=1$ and in Section~\ref{sec:index_error_ell2} for $\ell=2$, then they are discussed in Section~\ref{sec:model} for $\ell=3$. In Section~\ref{sec:residue_algorithm}, we introduce EC D-LOCO codes and present the residue decoding algorithm, illustrated
with examples and a flowchart. In Section~\ref{sec:storage-overhead}, we optimize the storage overhead via standard techniques and discuss the predicted capacity-approachability of EC D-LOCO codes based on a polynomial model of the redundancy metric $R(m)$. Next, we compare our work with \cite{nguyen_etal} and \cite{barlev_etal} from the finite-length perspective and with \cite{song_delsub} in terms of asymptotics, and then discuss how to address double-substitution errors in Section~\ref{sec:comp_doub}. Finally, we end the paper with conclusion and future directions.

\section{Preliminaries} \label{sec:prelim}
In \cite{D-LOCO}, we defined D-LOCO codes addressing (P1) as well as the balance constraint. Using the novel technique of LOCO coding \cite{ahh_general}, we obtained the encoding-decoding rule for D-LOCO codes, whose encoding-decoding algorithms are based on \cite{D-LOCO}. We state this rule below for reference in the paper.

\begin{definition} An element $\mathbf{w}$ in $\{A, T, G, C\}^m$, for some $m \geq 1$, is called a word (of length $m$).
\end{definition}

\begin{definition} \label{def:LOCO} The D-LOCO code $\mathcal{D}_{m,\ell}$ with parameters $m \geq \ell \geq 1$ is defined as the set of all words of length $m$ that do not contain any patterns in $\{\mathbf{\Lambda}^{\ell+1} \textrm{ } | \textrm{ } \Lambda \in \{A,T,G,C\}\}$. Here, $\mathbf{\Lambda}^{\ell+1}$ is the sequence of length $\ell+1$ all of whose symbols are $\Lambda$, and we call such a sequence $\mathbf{\Lambda}^{\ell+1}$ a run of length $\ell+1$. 

Elements of $\mathcal{D}_{m,\ell}$ are called codewords. They are ordered lexicographically in an ascending manner according to the rule $A < T < G < C$ for any symbol, where the symbol significance reduces from left to right.
\end{definition}

\begin{definition} \label{def:card} The codebook size of the D-LOCO code $\mathcal{D}_{m,\ell}$ is denoted by $N_{\mathrm{D}}(m,\ell)$. For ease of notation, we denote this cardinality by $N(m)$ whenever the context reveals $\ell$. We have $N(m) = 3 \sum^{\ell}_{k=1} N(m-k)$, where $N(0) \triangleq 4/3$ and $N(r) \triangleq 0$ for $r \leq -1$ \cite[Equation (1)]{immink_cai}.
\end{definition}

\begin{remark} \label{rem:word} When a substitution error occurs in the codeword $\mathbf{c} \in \mathcal{D}_{m,\ell}$, the resulting erroneous word $\mathbf{w}$ may not be free of runs of length $\ell + 1$. However, it is guaranteed that $\mathbf{w}$ contains at most one run of largest length $\in \{\ell+1, \ell+2, \cdots, 2\ell+1\}$. The longest erroneous run results from an error on the symbol between two runs of length $\ell$ each.
\end{remark}

The encoding-decoding rule gives the index of a D-LOCO codeword given its own symbols (decoding) and vice versa (encoding). Let $g(\mathbf{c})$ be the index of the D-LOCO codeword $\mathbf{c} \triangleq c_{m-1} c_{m-2} \dots c_0$ in $\mathcal{D}_{m,\ell}$. Below, we let $\delta$ stand for the small letter in $\{\mathsf{a},\mathsf{t},\mathsf{g}\}$ corresponding to some $\Delta$ in $\{A, T, G\}$.

\begin{theorem} \label{thm:encoding_rule} The encoding-decoding rule $g: \mathcal{D}_{m,\ell} \rightarrow \{0,1,\break\dots,N(m)-1\}$, for general $\ell \geq 1$, is as follows:
\begin{equation}
g(\mathbf{c})= \frac{3}{4} \sum^{m-1}_{i=0} \sum^{\ell}_{j=1} \sum^{j}_{k=1} (\mathsf{a}_{i,k}+\mathsf{t}_{i,k}+\mathsf{g}_{i,k}) N(i+j-\ell). \label{eq:rule_gen}
\end{equation}

\noindent For $1 \leq k \leq \ell$, $\delta_{i,k} \in \{0,1\}$ and $\delta_ {i,k} = 1$ if and only if $c_i > \Delta, c_{i+j} = \Delta$ for all $i+1 \leq j \leq i+k-1$, and $c_{i+k} \neq \Delta$. 

Note that for $k = 1$, the condition $c_{i+j} = \Delta$ for all $i+1 \leq j \leq i+k-1$ is naturally satisfied because there is no $j$ satisfying $i+1 \leq j \leq i+k-1$ for $k=1$. 

Also, we define $c_i \triangleq C$, for all $i > m-1$, to represent ``out of codeword bounds''. 
\end{theorem}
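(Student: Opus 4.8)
The plan is to establish that $g$ as defined in \eqref{eq:rule_gen} is exactly the lexicographic rank of $\mathbf{c}$ within $\mathcal{D}_{m,\ell}$ under the ordering $A < T < G < C$ with leftmost-symbol significance. The natural approach is to count, for each position, how many valid codewords lexicographically precede $\mathbf{c}$ by virtue of having a smaller symbol at that position while agreeing with $\mathbf{c}$ on all more-significant (earlier) positions. I would write $g(\mathbf{c}) = \sum_{i} (\text{number of codewords } \mathbf{c}' \in \mathcal{D}_{m,\ell} \text{ with } c'_{m-1}\cdots c'_{i+1} = c_{m-1}\cdots c_{i+1} \text{ and } c'_i < c_i)$, and then show this sum collapses to the triple summation in the theorem. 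The core combinatorial fact to prove is that, once the prefix $c_{m-1}\cdots c_{i+1}$ and a strictly smaller symbol $\Delta < c_i$ at position $i$ are fixed, the number of ways to legally complete positions $i-1,\dots,0$ is governed by the cardinality function $N$ evaluated at a shifted length.

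The key step is the run-length bookkeeping. Because the only forbidden patterns are runs of length $\ell+1$, whether a given symbol $\Delta$ is admissible at position $i$ (and how many completions it admits) depends solely on how many consecutive copies of $\Delta$ already sit immediately to the right, at positions $i+1,i+2,\dots$. This is precisely what the indicator $\delta_{i,k}$ encodes: $\delta_{i,k}=1$ exactly when placing a symbol $>\Delta$ at position $i$ forces $\Delta$ to occupy positions $i+1$ through $i+k-1$ and then be broken at position $i+k$, i.e.\ when the run of $\Delta$'s to the right has length exactly $k-1$. First I would verify that for each $i$ and each $\Delta \in \{A,T,G\}$, exactly one of $\delta_{i,1},\dots,\delta_{i,\ell}$ can be nonzero (the run-length of $\Delta$ to the right is well-defined and capped at $\ell-1$ by the constraint), and that the contribution of descending from $c_i$ to a smaller symbol $\Delta$ is the count of legal suffixes, which I claim equals $\tfrac34 N(i+j-\ell)$ summed appropriately. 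The boundary convention $c_i \triangleq C$ for $i > m-1$ handles the leftmost positions uniformly, since $C$ is the maximal symbol and never initiates a leftward run constraint.

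The main obstacle I anticipate is justifying the exact form of the suffix count, namely why the number of admissible completions of the low-order positions, given a freshly-placed symbol $\Delta$ preceded (to its right) by a run of length $k-1$, equals $\tfrac34 N(i+j-\ell)$ with the specific index shift $i+j-\ell$. This requires relating the recursion $N(m) = 3\sum_{k=1}^{\ell} N(m-k)$ to the count of constrained suffixes that may begin with up to $\ell$ repetitions, and correctly accounting for the $3/4$ normalization arising from the convention $N(0)=4/3$. I would handle this by a careful induction on $m$: assuming the rule holds for all lengths $< m$, I would split the codewords of $\mathcal{D}_{m,\ell}$ according to their leading run of length $1\le r\le \ell$, apply the inductive hypothesis to the remaining length-$(m-r)$ suffix, and match terms. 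The delicate part is ensuring the run interacting across the prefix-suffix boundary is counted once and only once; the three nested sums over $j$ and $k$ are organized precisely to telescope these boundary runs correctly, and verifying this telescoping is the crux of the argument.
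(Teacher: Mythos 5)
Your proposal is correct and follows essentially the same route as the paper's proof: the index is the sum over positions of the number of codewords agreeing on the more-significant prefix and carrying a smaller symbol at that position, and each such count is evaluated through the depth indicators $\delta_{i,k}$ together with the cardinality recursion for $N(\cdot)$. The only difference is one of packaging --- the paper imports the rank decomposition from the general LOCO framework of \cite{ahh_general} and evaluates the completion counts directly as $N_{A,k}(i+1)=\tfrac{3}{4}N(i+1-k)$ by code symmetry before reindexing the double sum, whereas you propose to establish the same counts by an induction on $m$ over the leading-run decomposition, which amounts to the same computation (note also that positions $i+1,\dots,i+k-1$ lie to the \emph{left} of position $i$ in the paper's ordering, not to the right as you wrote, though the indices you use are the correct ones).
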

\begin{proof}
In \cite{ahh_general}, $g_i(c_i)$ is formulated as follows:
\begin{equation} g_i(c_i)=\sum_{c_i' < c_i} N_{\mathrm{symb}}(m,c_{m-1}c_{m-2}\dots c_{i+1} c_i'), 
\end{equation}
where $N_{\mathrm{symb}}(m,c_{m-1}c_{m-2}\dots c_{i+1} c_i')$ computes the number of codewords $\mathbf{u}=c_i'u_{i-1}\dots u_1u_0$ in $\mathcal{D}_{i+1,\ell}$ such that $c_{m-1}c_m \dots c_{i+1}c_i'u_{i-1}\dots u_1u_0$ is a codeword in $\mathcal{D}_{m,\ell}$. Note that $g_i(A)=0$ and letting $N_{A,k}(n)$ denote the number of length-$n$ codewords starting with exactly $k$  consecutive $A$'s on the left, we have
\begin{align} &g_i(T) = N_{\mathrm{symb}}(m,c_{m-1}c_{m-2}\dots c_{i+1}A) \nonumber \\
&=\mathsf{a}_{i,1}[N_{A,1}(i+1)+N_{A,2}(i+1) +\dots +N_{A,\ell}(i+1)] \nonumber \\
&+\mathsf{a}_{i,2}[N_{A,1}(i+1)+N_{A,2}(i+1)+\dots+N_{A,\ell-1}(i+1)] \nonumber \\
&+\dots \nonumber \\
&+\mathsf{a}_{i,\ell-1}[N_{A,1}(i+1)+N_{A,2}(i+1)]+\mathsf{a}_{i,\ell}N_{A,1}(i+1) \nonumber \\
&= (\mathsf{a}_{i,1}+\mathsf{a}_{i,2}+\dots+\mathsf{a}_{i,\ell})N_{A,1}(i+1) + \dots \nonumber \\
&+(\mathsf{a}_{i,1}+\mathsf{a}_{i,2})N_{A,\ell-1}(i+1)+\mathsf{a}_{i,1}N_{A,\ell}(i+1) \nonumber \\ 
&=\frac{3}{4} \sum^{\ell}_{j=1} \sum^{j}_{k=1} \mathsf{a}_{i,k} N(i+j-\ell),
\end{align}
where $\mathsf{a}_{i,k}$'s are as in the statement. $g_i(G)$ and $g_i(C)$ can be obtained in a similar manner, and the proof completes.

\end{proof}
\begin{figure*}[ht!]
\vspace{-1.5em}
\noindent\makebox[\linewidth]{\rule{\textwidth}{1,25pt}}
\begin{equation} \label{eq:explicit_exp_ell=1} g(\mathbf{c}) =  \frac{3}{4}\sum^{m-1}_{i=0} (\mathsf{a}_{i,1}+\mathsf{t}_{i,1}+\mathsf{g}_{i,1}) N(i)  = \frac{1}{4}\sum^{m-1}_{i=0} (\mathsf{a}_{i,1}+\mathsf{t}_{i,1}+\mathsf{g}_{i,1}) N(i+1), \hspace{1cm} \textrm{ where $\mathbf{c} \in \mathcal{D}_{m, 1}$.}
\end{equation}
\noindent\makebox[\linewidth]{\rule{\textwidth}{1,25pt}}
\begin{align} \label{eq:explicit_exp_ell=2} g(\mathbf{c}) &=  \frac{3}{4}\sum^{m-1}_{i=0} \bigg[ (\mathsf{a}_{i,1}+\mathsf{t}_{i,1}+\mathsf{g}_{i,1}+\mathsf{a}_{i,2}+ \mathsf{t}_{i,2}+\mathsf{g}_{i,2}) N(i)+ (\mathsf{a}_{i,1}+ \mathsf{t}_{i,1}+\mathsf{g}_{i,1})N(i-1)   \bigg]
\nonumber  \\ 
&= \frac{1}{4}\sum^{m-1}_{i=0} \bigg[ (\mathsf{a}_{i,1}+\mathsf{t}_{i,1}+\mathsf{g}_{i,1}) N(i+1)+ 3\cdot(\mathsf{a}_{i,2}+ \mathsf{t}_{i,2}+\mathsf{g}_{i,2})N(i)   \bigg], \hspace{1cm} \textrm{ where $\mathbf{c} \in \mathcal{D}_{m, 2}$.} 
\end{align}
\noindent\makebox[\linewidth]{\rule{\textwidth}{1,25pt}}
\vspace{-2em}
\end{figure*}

\begin{definition} 
\noindent For a codeword $\mathbf{c} \in \mathcal{D}_{m, \ell}$, the inner sums
\begin{equation} g_i(c_i) \triangleq \frac{3}{4} \sum^{\ell}_{j=1} \sum^{j}_{k=1} (\mathsf{a}_{i,k}+\mathsf{t}_{i,k}+\mathsf{g}_{i,k}) N(i+j-\ell) 
\end{equation}
 in (\ref{eq:rule_gen})  
 are collectively called the symbol contribution of the $i^{\textrm{th}}$ symbol $c_i$ to the index of $\mathbf{c}$, where $g_i(c_i)$ is a brief version of $g_i(\mathbf{c},c_i)$ and we have $g(\mathbf{c})= \sum^{m-1}_{i=0} g_i(c_i)$ \cite{ahh_general}. 
\end{definition}

Based on (2), we define a notion of depth to keep track of the changes in symbol contributions due to substitutions. 

\begin{definition} \label{def:depth} Fix $\mathbf{c} \in \mathcal{D}_{m,\ell}$. Given a letter $\Delta \in \{A, T, G\}$ with $c_i > \Delta$, the \textit{depth} of $\Delta$ at $i$ is defined to be 
\[ \mathrm{dep}_i(\Delta) =\begin{cases} k& \textrm{if } \delta_{i,k} = 1 \textrm{ for unique $k$ s.t. }  1 \leq k \leq \ell, \\
\ell + 1 & \textrm{otherwise} \\ & \textrm{(i.e., if } \delta_{i,k} = 0 \textrm{ for all } 1 \leq k \leq \ell).
\end{cases} 
\]
In words, the depth of a symbol $\Delta$ smaller than $c_i$ at $i$ is ``(how many times it appears right before $c_i$ consecutively) $+ 1$''. We set $\mathrm{dep}_i(\Delta) \triangleq 0$ if $c_i \leq \Delta \leq C$ for any $0 \leq i \leq m-1$. 

We use the shorthand notation $\mathrm{dep}_i(\Delta): k_1 \rightarrow k_2$ to denote the depths of $\Delta  \in \{A,T,G,C\}$ before and after a substitution error at some symbol $c_j$ of $\mathbf{c}$, where $j$ may or may not be the same as $i$.
More precisely, the notation means that the depth of $\Delta$ at $i$ in the codeword $\mathbf{c}$ is $k_1$, and after $c_j$ is substituted with a symbol $w_j = c_j' \neq c_j$, the depth of $\Delta$ at $i$ in the new word $\mathbf{w}$ becomes $k_2$, where $\mathbf{w}\triangleq w_{m-1}w_{m-2}\dots w_0=c_{m-1}c_{m-2}\dots c_{j+1}c_j' c_{j-1}\dots  c_0$. We will refer to $k_1$ ($k_2$) as the initial (final) depth of $\Delta$. 
\end{definition}

For readers' convenience and future reference, we give an illustration of  how the index of a codeword is computed using Theorem~\ref{thm:encoding_rule}. Note that (\ref{eq:explicit_exp_ell=1}) and (\ref{eq:explicit_exp_ell=2}) give alternative expressions for the encoding-decoding rule where $\ell = 1$ and $\ell=2$, respectively. They can be obtained from (\ref{eq:rule_gen}) by using the recursive relation in Definition~\ref{def:card}.

\begin{example} \label{exam:index_comp} 
Consider the encoding-decoding rule $g:\mathcal{D}_{6,1} \rightarrow \{0,1,2,\cdots,971\},$
and the codeword $\mathbf{c} = AGTCAG \in \mathcal{D}_{6,1}$. The cardinality of $\mathcal{D}_{m,1}$ obeys the recursive relation $N_{\mathrm{D}}(m,1)=3N_{\mathrm{D}}(m-1,1)$, or in short $N(m) = 3N(m-1)$,
where $N(0) \triangleq 4/3$. In particular, $N(1)=4$, $N(2)=12$, $N(3)=36$, $N(4)=108$, $N(5)= 324$, and $N(6)=972$.  We have:

\begin{itemize}
\item[a)] $g_5(A)=g_1(A)=0$. 
\item[b)] For $c_4$, $\mathsf{t}_{4,1}=1$ and all other coefficients are zero so that $g_4(G)=N(5)/4=81$.
\item[c)] For $c_3$, $\mathsf{a}_{3,1}=1$ and all other coefficients are zero so that $g_3(T)=N(4)/4=27$.
\item[d)] For $c_2$, $\mathsf{a}_{2,1}=\mathsf{g}_{2,1}=1$ and all other coefficients are zero so that $g_2(C)=2 \cdot N(3)/4=18$.
\item[e)] For $c_0$, $\mathsf{t}_{0,1}=1$ and all other coefficients are zero so that $g_0(G)=N(1)/4=1$.
\end{itemize}
Adding these, we get $g(\mathbf{c})=\sum^{5}_{i=0} g_i(c_i)=127$. 
\end{example}

\begin{remark} \label{rem:formal_index}
Note that the encoding-decoding rule $g$ on $\mathcal{D}_{m,\ell}$ extends to all words of length $m$, where the range becomes bigger. However, mapping here is not one-to-one and not necessarily onto. For instance, let us fix $\ell = 1$ and consider the word $\mathbf{w}=AG\mathbf{C}CAG \not \in \mathcal{D}_{6,1}$. We have
\begin{itemize}
\item[a)] $g_5(A)=g_1(A) = 0$. 
\item[b)] For $w_4$, $\mathsf{t}_{4,1}=1$ and all other coefficients are zero so that $g_4(G)=N(5)/4=81$.
\item[c)] For $w_3$, $\mathsf{a}_{3,1}=\mathsf{t}_{3,1}=1$ and all other coefficients are zero so that $g_3(C)=2 \cdot N(4)/4=54$.
\item[d)] For $w_2$, $\mathsf{a}_{2,1}=\mathsf{t}_{2,1}=\mathsf{g}_{2,1}=1$ and all other coefficients are zero so that $g_2(C)=3 \cdot N(3)/4=27$.
\item[e)] For $w_0$, $\mathsf{a}_{0,1}=1$ and all other coefficients are zero so that $g_0(G)=N(1)/4=1$.
\end{itemize}
These contributions are denoted by $g_i(\mathbf{w}, w_i, \ell)$, which is abbreviated to $g_i(\mathbf{w}, w_i)$ or to $g_i(w_i)$ when the context is clear. Adding these, we get $\sum^{5}_{i=0} g_i(w_i)=163$. 

In the context where $\ell$ is fixed and for a word $\mathbf{w} \in \{A,T,G,C\}^m$, the number $\sum^{m-1}_{i=0} g_i(w_i)$ will be called its formal index and we will denote it by $g(\mathbf{w}, \ell)$. For codewords in $\mathcal{D}_{m,\ell}$, we have $g(\cdot, \ell)=g(\cdot)$. Note that the formal index of a word $\mathbf{w}$ can exceed $N_{\mathrm{D}}(m,\ell)-1$. 
\end{remark}

\begin{remark} \label{rem:interdependence} Comparing the index computations for $\mathbf{c}$ in Example~\ref{exam:index_comp} and $\mathbf{w}$ in Remark~\ref{rem:formal_index} above, we observe that a substitution error at $c_3$ not only affects the symbol contribution $g_3(\cdot)$ but also $g_2(\cdot)$ as a result of $\mathrm{dep}_2(c_3 = T): 2 \rightarrow 1$. Due to this phenomenon, the number of index differences due to single substitutions increases with $\ell$, resulting in higher redundancy metric for fixed $m$. We will answer the question of whether, given the finite-length rate considerations, it is worth it to use $\mathcal{D}_{m,3}$ over $\mathcal{D}_{m,2}$ in designing DNA strands for practical code lengths (see Fig.~\ref{fig:cmp} in Section~\ref{sec:model}).
\end{remark}

\begin{definition} \label{def:set_difference} For D-LOCO code $\mathcal{D}_{m,\ell}$, we denote the set 
\begin{align*} \big \{g(\mathbf{w},\ell)-g(\mathbf{c}) \textrm{ } | &\textrm{ } \mathbf{c} \in \mathcal{D}_{m,\ell}, \textup{ }  \mathbf{w} \in \{A,T,G,C\}^m, \textrm{ and } \nonumber \\
& \mathbf{c} \textrm{ and } \mathbf{w} \textrm{ differ at at most one location}\big\}
\end{align*}
of index differences due to at most one substitution error by $\mathcal{E}(m, \ell)$, and $\mathcal{E}^+(m, \ell)$ will denote the set of non-negative index differences in $\mathcal{E}(m, \ell)$.
\end{definition}
We will tackle $\mathcal{E}^+(m,\ell)$ in the next two sections for $\ell \in \{1,2\}$. The case of $\ell=1$ will serve as a prototype, while the case of $\ell=2$ is more involved and 
will be of utmost interest to us as we will illustrate. 

\section{Index Errors Due to Single Substitution \\ for $\ell =1$} \label{sec:index_error_ell1}

\subsection{Index Differences for $\ell =1$} \label{sec:index-diff-ell=1}

For $\ell=1$, the (minimal) set of forbidden patterns includes only $2$-tuple patterns (see Definition~\ref{def:LOCO}). Therefore, a substitution error can only affect two symbol contributions. Consequently, an index difference in $\mathcal{E}^+(m,1)$ is of the form
\begin{align} g_i(\mathbf{w},w_i) - g_i(\mathbf{c},c_i) + \big[g_{i-1}(\mathbf{w},c_{i-1})-g_{i-1}(\mathbf{c},c_{i-1})\big],
\end{align}
where $w_i > c_i$ for some $0 \leq i \leq m-1$ and $w_j=c_j$ for all other locations $j \neq i$. Here, we set $g_{-1}(\cdot) \triangleq 0$ and denote
\vspace{-0.1em}
\begin{align}
\hspace{-0.5em}g_i(c_{i+1} c_{i} \rightarrow c_{i+1} w_i) & \triangleq g_i(\mathbf{w},w_i) - g_i(\mathbf{c},c_i), \nonumber \\
\hspace{-0.5em}g_{i-1}(c_i c_{i-1} \rightarrow w_i c_{i-1}) & \triangleq  g_{i-1}(\mathbf{w},c_{i-1})-g_{i-1}(\mathbf{c},c_{i-1}).
\end{align}
In case we want to skip the symbols for brevity, we will simply denote these differences by $g_i^{\Delta}$ and $g_{i-1}^{\Delta}$, respectively. We study possible values for the changes in symbol contributions individually below, and then discuss their interdependence in Remark~\ref{rem:interdependence}. We follow the correspondence ($c_i \longleftrightarrow a_i$) below: 
\begin{align} \label{eqn_int} A &\longleftrightarrow 0 \textup{ } (\mathrm{mod} \textup{ } 4), \hspace{1cm} T \longleftrightarrow 1 \textup{ } (\mathrm{mod} \textup{ } 4), \nonumber \\
G &\longleftrightarrow 2 \textup{ } (\mathrm{mod} \textup{ } 4), \hspace{1cm} C \longleftrightarrow 3 \textup{ } (\mathrm{mod} \textup{ } 4).
\end{align}
The integer $(\mathrm{mod} \textup{ } 4)$ equivalent of symbol $c_i$ is defined according to (\ref{eqn_int}) as $a_i$. For simplicity, while expressing $g_i^{\Delta}$ and $g_{i-1}^{\Delta}$ in terms of symbols $c_i$ and $w_i$, we overload the notation $c_i$ and $w_i$ with the respective integer equivalents. Note that $A$ and $C$ as well as $T$ and $G$ are said to be complements. Codeword complements are always symbol-wise complements.

\subsection{Changes in Symbol Contributions for $\ell =1$}

We study the above changes $g_i(c_{i+1} c_i \rightarrow c_{i+1} w_i)$ and $g_{i-1}(c_i c_{i-1} \rightarrow w_i c_{i-1})$ using the ordering of the symbols $c_i$, $w_i$, and $c_{i+1}$, where $c_i < w_i$. Note that $c_{i+1}$ needs~to be different from $c_i$ as the original codeword in $\mathcal{D}_{m,1}$ is free of runs of length $2$. Note, however, that after a substitution error, the received word may not be a D-LOCO codeword (see Remark~\ref{rem:word}). For the following analysis, we assume $i \geq 1$. 

We first study $g_i(c_{i+1} c_i \rightarrow c_{i+1} w_i)$.
\begin{enumerate}
\item Suppose $c_i < w_i \leq c_{i+1}$, which implies that there is no change in the depth $\mathrm{dep}_i(c_{i+1})$ (and it is $0$). Thus,
$$g_i^{\Delta} = \frac{1}{4}(w_i-c_i)N(i+1) \textrm{, where } w_i- c_i \in \{1,2,3\}.$$

\item Suppose $c_i < c_{i+1} < w_i$, which implies that the depth $\mathrm{dep}_i(c_{i+1})$ changes from $0$ to $2$. Thus,
$$g_i^{\Delta} = \frac{1}{4}(w_i-c_i-1)N(i+1) \textrm{, where } w_i- c_i \in \{2,3\}.$$

\item Suppose $c_{i+1} < c_i < w_i $, which implies that there is no change in the depth $\mathrm{dep}_i(c_{i+1})$ (and it is $2$). Thus,
$$g_i^{\Delta} = \frac{1}{4}(w_i-c_i)N(i+1) \textrm{, where } w_i- c_i \in \{1,2\}.$$
\end{enumerate}
	
We now study $g_{i-1}(c_i c_{i-1} \rightarrow w_i c_{i-1})$.		
Note that if $c_{i-1} < c_i$, then $g_{i-1}(c_i c_{i-1} \rightarrow w_i c_{i-1})=0$ since $c_i < w_i$. Thus, a nonzero change in symbol contribution at $i-1$ can occur, if any, only due to changes in 
 \begin{enumerate}
 \item the depth of $c_i$ at $i-1$ if $c_i < c_{i-1} \leq  w_i$, or
 \item the depths of $c_i$ and $w_i$ at $i-1$ if $c_i < w_i < c_{i-1}$.
 \end{enumerate}
 
\noindent In the first case, $\mathrm{dep}_{i-1}(c_i): 2 \rightarrow 1$ (one $\delta_{i-1,1}$ becomes $1$), and thus using (\ref{eq:explicit_exp_ell=1})
$$g_{i-1}(c_i c_{i-1} \rightarrow w_i c_{i-1}) = \frac{1}{4}N(i).$$ In the second case, we must have $\mathrm{dep}_{i-1}(c_i): 2 \rightarrow 1$ and $\mathrm{dep}_{i-1}(w_i): 1 \rightarrow 2$, and thus
$$g_{i-1}(c_i c_{i-1} \rightarrow w_i c_{i-1}) = 0.$$

\begin{example} 
For the case of $c_{i+1} c_i c_{i-1} = TAC$ and $w_i = G$, we have $\mathrm{dep}_i(c_{i+1} = T): 0 \rightarrow 2$, leading to $g_i(TA \rightarrow TG) = N(i+1)/4$. Moreover, $\mathrm{dep}_{i-1}(c_i = A): 2 \rightarrow 1$ and $\mathrm{dep}_{i-1}(w_i = G): 1 \rightarrow 2$, leading to $g_{i-1}(AC \rightarrow GC) = 0$. Hence, the index difference $g(\mathbf{w},1)-g(\mathbf{c})$ is $N(i+1)/4$. Observe that the number of codewords starting with $A$ or $G$ is the same because of code symmetry.
\end{example}

\begin{definition} \label{def:indexset1} We denote the set 
$$\big \{ \big[i; \theta \cdot N(i+1)/4 \big] \textrm{ } | \textrm{ } 0 \leq i \leq m-1, \theta \in \{1,2,3\}\big \}$$
of pairs by $\mathsf{E}_1$, and the set
$$\big \{ \big[i-1; \theta \cdot N(i)/4 \big] \textrm{ } | \textrm{ } 1 \leq i \leq m-1, \theta \in \{0, 1\}\big \} \, \cup \, \{[-1;0]\}$$
of pairs by $\mathsf{E}_2$. Here, $i$ is the location of the substitution.
\end{definition}

Below, we introduce a remark that will be more crucial when $\ell$ is in $\{2,3\}$ (see Section~\ref{sec:interdependence_for_ell=2}).

\begin{remark} \label{rem:interdependence_opt}
Note that $g_i^{\Delta}$ can take the expression $\frac{3}{4}N(i+1)$ only if $c_i = A$ and $w_i=C$, and in this scenario, for $i \geq 1$,
\begin{itemize} 
\item[-] the case $c_{i-1} < c_i$, and
\item[-]  the second case in the analysis of $g_{i-1}(c_i c_{i-1} \rightarrow w_i c_{i-1})$ 
\end{itemize} 
cannot be realized. This is an instance showing the fact that changes in symbol contributions at indices $i$ and $i-1$ are not independent from each other. Practically, the set 
\begin{align*}
\{x + y \textrm{ } | \textrm{ } [i; x] \in \mathsf{E}_1, \textrm{ } [i-1;y] \in \mathsf{E}_2 \textrm{ for all } 0 \leq i \leq m-1\} 
\end{align*}
is larger than what we want. It is important to eliminate as many cases as possible while bounding $\mathcal{E}^+(m,\ell)$ from above in order to achieve higher rates and lower storage overhead.
\end{remark}

\begin{definition} \label{def:indexset1} The set 
\begin{align*}  & \bigg \{\theta_1   \cdot  N(i+ 1)  /4   + \theta_2 \cdot N(i)/4 \textrm{ } \big | \textrm{ }  1 \leq i \leq m-1,   \\
& \textrm{ } (\theta_1,\theta_2) \in \big \{(1,0),(2,0), (1,1), (2,1),(3,1)\big\} \bigg \} \, \cup \, \{0,1,2, 3\} 
\end{align*}
is defined as the set of non-negative index errors for $\ell=1$ and denoted by $\mathcal{E}_{\textrm{sup}}^+(m,1)$. We have $\mathcal{E}^+(m,1) \subseteq \mathcal{E}_{\textrm{sup}}^+(m,1)$, and the cardinality $|\mathcal{E}_{\textrm{sup}}^+(m,1)|$ is around $5m$.
\end{definition}

\section{Index Errors Due to Single Substitution \\ for $\ell=2$} \label{sec:index_error_ell2}
\subsection{Index Differences for $\ell =2$} \label{sec:index_diff_ell=2}

For $\ell=2$, the (minimal) set of forbidden patterns includes only $3$-tuple patterns (see Definition~\ref{def:LOCO}). Therefore, a substitution error can only affect three symbol contributions. Consequently, an index difference in $\mathcal{E}^+(m,2)$ is of the form
\begin{align}  g_i(\mathbf{w},w_i) &- g_i(\mathbf{c},c_i) + \big[g_{i-1}(\mathbf{w},c_{i-1})-g_{i-1}(\mathbf{c},c_{i-1})\big] \nonumber \\
&+ \big[g_{i-2}(\mathbf{w},c_{i-2})-g_{i-2}(\mathbf{c},c_{i-2})\big],
\end{align}
where $w_i > c_i$ for some $0 \leq i \leq m-1$ and $w_j=c_j$ for all other locations $j \neq i$. Here, we set $g_{-1}(\cdot)=g_{-2}(\cdot)=0$. For simplicity, we denote 
\begin{align}\label{diffs_l2}
g_i(c_{i+2} c_{i+1} &c_i  \rightarrow c_{i+2}c_{i+1} w_i)  \triangleq g_i(\mathbf{w},w_i) - g_i(\mathbf{c},c_i), \nonumber \\
g_{i-1}(c_{i+1}c_i & c_{i-1} \rightarrow c_{i+1}w_i c_{i-1}) \nonumber \\
& \triangleq  g_{i-1}(\mathbf{w},c_{i-1})-g_{i-1}(\mathbf{c},c_{i-1}), \textrm{ and} \nonumber \\
g_{i-2} (c_i c_{i-1} & c_{i-2} \rightarrow w_i c_{i-1} c_{i-2}) \nonumber\\
&\triangleq g_{i-2}(\mathbf{w},c_{i-2})-g_{i-2}(\mathbf{c},c_{i-2}).
\end{align}

We study possible values for changes in symbol contributions individually below. In case we want to skip the symbols, we simply denote the differences in (\ref{diffs_l2}) by $g_i^{\Delta}$, $g_{i-1}^{\Delta}$, and $g_{i-2}^{\Delta}$. These values are interdependent, and we will revisit them in Section~\ref{sec:interdependence_for_ell=2}.

\subsection{Changes in Symbol Contributions for $\ell =2$}
In this section, we obtain the possible values for $g_i^{\Delta}$, $g_{i-1}^{\Delta}$, and $g_{i-2}^{\Delta}$ based on the ordering of the symbols $c_i , w_i$, and $c_{i+1}$ where $c_i < w_i$. Note that $c_i$, $c_{i+1}$, and $c_{i+2}$ cannot all be the same symbol as the original codeword in $\mathcal{D}_{m,2}$ is free of runs of length $3$. For the following analysis, we assume $i \geq 2$. 

We first study $g_i(c_{i+2} c_{i+1} c_i \rightarrow c_{i+2} c_{i+1} w_i)$ using (\ref{eq:explicit_exp_ell=2}).

\begin{enumerate}
\item Suppose $c_i < w_i \leq c_{i+1}$, which implies that there is no change in the depth $\mathrm{dep}_i(c_{i+1})$ (and it is $0$). Thus,
$$g_i^{\Delta} = \frac{1}{4}(w_i-c_i)N(i+1) \textrm{ where } w_i- c_i \in \{1,2,3\}.$$

\item Suppose $c_i \leq c_{i+1} < w_i $ and $c_{i+2} \neq c_{i+1}$, which implies that $\mathrm{dep}_i(c_{i+1}): 0 \rightarrow 2$. Thus, \\
$$g_i^{\Delta} = \frac{1}{4}(w_i-c_i-1)N(i+1)+ \frac{3}{4}N(i),$$
$\textrm{ where } w_i- c_i \in \{1,2,3\}.$ 

\item Suppose $c_i < c_{i+1} < w_i $ and $c_{i+2} = c_{i+1}$, which implies that $\mathrm{dep}_i(c_{i+1}): 0 \rightarrow 3$. Thus, 
$$g_i^{\Delta} = \frac{1}{4}(w_i-c_i-1)N(i+1), \textrm{ where } w_i- c_i \in \{2,3\}.$$

\item Suppose $c_{i+1} < c_i < w_i $, which implies that there is no change in the depth $\mathrm{dep}_i(c_{i+1})$. Thus, 
$$g_i^{\Delta} = \frac{1}{4}(w_i-c_i)N(i+1), \textrm{ where } w_i- c_i \in \{1,2\}.$$
\end{enumerate}

We now study $g_{i-1}(c_{i+1}c_i c_{i-1} \rightarrow c_{i+1}w_i c_{i-1})$. Note that if $c_{i-1} \leq c_i$, then $g_{i-1}(c_{i+1}c_i c_{i-1} \rightarrow c_{i+1}w_i c_{i-1})=0$. Thus, a nonzero change in symbol contribution at index $i-1$ can occur, if any, due to changes in 
\begin{enumerate}
\item the depth of $c_i$ at $i-1$ if $c_i < c_{i-1} \leq w_i$, or
\item the depths of $c_i$ and $w_i$ at $i-1$ if $c_i < w_i < c_{i-1}$.
\end{enumerate}
 
In the first case, the depth of $c_i$ at $i-1$ is decreased from $2$ to $1$ (one $\delta_{i-1,2}$ becomes $0$, whereas $\delta_{i-1,1}$ becomes $1$) or from $3$ to $1$ (one $\delta_{i-1,1}$ becomes $1$),
and thus using (\ref{eq:explicit_exp_ell=2}) 
\begin{align*} g_{i-1}(c_{i+1}c_i c_{i-1} \rightarrow & \textrm{ } c_{i+1}w_i c_{i-1}) = \frac{3}{4}N(i-2) \textrm{ or}\\
g_{i-1}(c_{i+1}c_i c_{i-1} \rightarrow & \textrm{ } c_{i+1}w_i c_{i-1}) \\
& = \frac{3}{4}(N(i-1)+N(i-2)).
\end{align*}

In the second case, we have the following possibilities:
\begin{enumerate}
\item $\mathrm{dep}_{i-1}(c_i): 2 \rightarrow 1$ and $\mathrm{dep}_{i-1}(w_i): 1 \rightarrow 2$ or 
 \item $\mathrm{dep}_{i-1}(c_i): 2 \rightarrow 1$ and $\mathrm{dep}_{i-1}(w_i): 1 \rightarrow 3$ or 
 \item $\mathrm{dep}_{i-1}(c_i): 3 \rightarrow 1$ and $\mathrm{dep}_{i-1}(w_i): 1 \rightarrow 2$, and thus
\end{enumerate}
\begin{align*} & g_{i-1}(c_{i+1}c_i c_{i-1} \rightarrow c_{i+1}w_i c_{i-1}) = 0 \textrm{ or} \\
& g_{i-1}(c_{i+1}c_i c_{i-1} \rightarrow c_{i+1}w_i c_{i-1}) = -\frac{3}{4}N(i-1) \textrm{ or} \\
& g_{i-1}(c_{i+1}c_i c_{i-1} \rightarrow c_{i+1}w_i c_{i-1}) = \frac{3}{4}N(i-1).
\end{align*}

Finally, we study $g_{i-2}(c_i c_{i-1} c_{i-2} \rightarrow w_i c_{i-1} c_{i-2})$. Note that a nonzero change in symbol contribution at index $i-2$ occurs due to nonzero changes in the depth of $c_{i-1}$ at $i-2$. It changes either from $3$ to $2$ or from $2$ to $3$, and thus 
\begin{align*}
& g_{i-2}(c_i c_{i-1} c_{i-2} \rightarrow w_i c_{i-1} c_{i-2}) = \frac{3}{4}N(i-2) \textrm{ or} \\
& g_{i-2}(c_i c_{i-1} c_{i-2} \rightarrow w_i c_{i-1} c_{i-2}) = -\frac{3}{4}N(i-2).
\end{align*}
Moreover, $g_{i-2}(c_i c_{i-1} c_{i-2} \rightarrow w_i c_{i-1} c_{i-2})$ can be zero as well, in case $c_{i-1} \neq w_i$ and $c_{i-1} \neq c_i$ for instance. 

\begin{example}
For the case of $c_{i+2} c_{i+1} c_i c_{i-1} c_{i-2} = TTAGC$ and $w_i = G$, we have $\mathrm{dep}_i(c_{i+1} = T): 0 \rightarrow 3$, leading to $g_i(TTA \rightarrow TTG) = N(i+1)/4$. Moreover, $\mathrm{dep}_{i-1}(c_i = A): 2 \rightarrow 1$ and $\mathrm{dep}_{i-1}(w_i = G): 0 \rightarrow 0$, leading to $g_{i-1}(TAG \rightarrow TGG) = 3N(i-2)/4$. Finally, $\mathrm{dep}_{i-2}(c_{i-1} = G): 2 \rightarrow 3$, leading to $g_{i-2}(ACG \rightarrow GCG) = -3N(i-2)/4$. Hence, the index difference $g(\mathbf{w},2)-g(\mathbf{c})$ is $N(i+1)/4$.
\end{example}

In the next section, we analyze the interdependence of changes in symbol contributions in order to achieve notably lower storage overhead and increase rates. 

\vspace{-1em}
\subsection{Interdependence of Symbol Contributions for $\ell=2$} \label{sec:interdependence_for_ell=2}
In this section, we study the interdependence of the changes in symbol contributions for general $m$ to obtain a set of non-negative index errors, containing all possible non-negative index differences. Some of the contribution change scenarios at indices $i$, $i-1$, and $i-2$ discussed above are mutually exclusive. Thus, we will eliminate some matchings based on that the original codeword is free of runs of length $> 2$. This \textit{optimization} reduces storage overhead. See Remark~\ref{rem:correcting_words_2} for the consequence of this (concerning negative index differences).

\begin{remark} \label{formal_equality} In the following analysis, whenever we discuss the impossible matchings for $g_i^{\Delta}$, $g_{i-1}^{\Delta}$, and $ g_{i-2}^{\Delta}$, we mean as formal expressions, and do not mean as integer values. An integer value resulting from an eliminated matching can still be attained due to a non-eliminated matching.
\end{remark}

The set of positive index differences due to a substitution at locations $0$ or $1$ is $\{0,1,\dots, 12\} \setminus \{5, 6, 9, 10\}$, where only $\{0,1,2,3\}$ are possible if the substitution is at location $0$ and $\{5, 6, 9, 10\}$ are removed because of impossible matchings. We keep the assumption $i \geq 2$ until the end of the section. 
\par
\textbf{Case 1:} Note that $g_i^{\Delta}$ can take the expressions $\frac{1}{4}N(i+1)$ or $\frac{2}{4}N(i+1)$ only if $c_{i+1} \neq c_i$. In which case, $g_{i-1}^{\Delta}$ cannot take the expressions $\frac{3}{4}(N(i-1)+N(i-2))$ or $\frac{3}{4}N(i-1)$ because the initial depth of $c_{i}$ at $i-1$ cannot be $3$.

\textbf{Case 2:} Note that $g_i^{\Delta}$ can take the expression $\frac{3}{4}N(i+1)$ only if $c_i =A$ and $w_i = c_{i+1} = C$. In which case, the only possible expressions for $g_{i-1}^{\Delta}$ are $\frac{3}{4}N(i-2)$ or $0$, and $g_{i-2}^{\Delta}$ cannot take the expression $-\frac{3}{4}N(i-2)$ (since the final depth of $c_{i-1}$ at $i-2$ cannot be $3$).
 
Moreover, note that we cannot observe $g_{i-1}^{\Delta}=0$ and $g_{i-2}^{\Delta}=0$ simultaneously in this case. For $g_{i-1}^{\Delta}$ to take $0$, we must have $c_{i-1}=A$ since otherwise, $\mathrm{dep}_{i-1}(c_i=A):2\rightarrow 1$, whereas $\mathrm{dep}_{i-1}(w_i=C)$ is unaffected (and zero). For $g_{i-2}^{\Delta}$ to take $0$ as well, we must have $c_{i-2}=A$ since otherwise, $\mathrm{dep}_{i-2}(c_{i-1}=A)$ is affected. However, $c_i$, $c_{i-1}$, and $c_{i-2}$ cannot all be identical due to the run-length constraint on $\mathbf{c}$.

\textbf{Case 3:} Note that $g_i^{\Delta}$ can take the expression $\frac{3}{4}N(i)$ only if $c_i = c_{i+1}$ and $w_i$ is the next symbol after $c_i$ (with respect to the ordering $A < T < G < C$). In which case, $g_{i-1}^{\Delta}$ cannot take the expressions $\frac{3}{4}N(i-2)$ or $-\frac{3}{4}N(i-1)$ as the initial depth of $c_i$ at $i-1$ cannot be $2$. This also implies that for $g_{i-1}^{\Delta}$ to take $0$, we must have $c_{i-1} \leq c_i=A$, which is not possible due to the run-length constraint on $\mathbf{c}$. Moreover, $g_{i-2}^{\Delta}$ cannot take the expression $\frac{3}{4}N(i-2)$ since $c_{i+1}=c_i$ and $\ell = 2$ imply that $c_{i-1} \neq c_i$, which means the initial depth of $c_{i-1}$ at $i-2$ cannot be $3$.

\textbf{Case 4:} Since $g_i^{\Delta} = \frac{1}{4}N(i+1)+\frac{3}{4}N(i)$ only if $c_{i+1} < w_i$, $g_{i-1}^{\Delta}$ cannot take the expression $-\frac{3}{4}N(i-1)$ as the final depth of $w_i$ at $i-1$ cannot be $3$.

\textbf{Case 5:} Note that $g_i^{\Delta}$ can take the expression $\frac{2}{4}N(i+1)+\frac{3}{4}N(i)$ only if $c_i = c_{i+1}= A$ and $w_i =C$. In which case, $c_{i-1} \neq A$ and the only possible expression for $g_{i-1}^{\Delta}$ is $\frac{3}{4}(N(i-1)+N(i-2))$ as the initial depth of $c_i$ at $i-2$ cannot be $2$ and the final depth of $w_i= C$ at $i-1$ is $0$ (by definition). Moreover, $g_{i-2}^{\Delta}$ cannot be $-\frac{3}{4}N(i-2)$ because the final depth of $c_{i-1}$ at $i-2$ cannot be $3$.

\begin{definition} \label{def:indexset2} We denote the set 
\begin{align*} \big\{ \big[i; & \textrm{ } \theta_1 \cdot N(i+1)/4 + \theta_2 \cdot 3N(i)/4 \big] \textrm{ } | \textrm{ } 2 \leq i \leq m-1, \nonumber \\ & (\theta_1,\theta_2) \in \{(1,0),(2,0),(3,0),(0,1), (1,1),(2,1)\}\big\}
\end{align*}
of pairs by $\mathsf{F}_1$, the set
\begin{align*}  \big\{ \big[i-1; \textrm{ } \theta_1 \cdot 3N(i-1)/4  &+  \theta_2 \cdot 3N(i-2)/4 \big] \textrm{ } | \textrm{ } \\
2 \leq i \leq m-1 &,  \\
(\theta_1,\theta_2) \in \{(0,0), (0,1),(1,1), & (1,0),(-1,0)\}\big\}
\end{align*}
of pairs by $\mathsf{F}_2$, and the set
\begin{align*} \big\{ \big[i-2; & \textrm{ } \theta \cdot 3N(i-2)/4 \big] \textrm{ } | \textrm{ } 2 \leq i \leq m-1, \theta \in \{-1, 0,1\}\big\}
\end{align*}
of pairs by $\mathsf{F}_3$. Here, $i$ is the location of the substitution.

\noindent The set $\mathcal{E}_{\textrm{sup}}^+(m, 2)$ of non-negative index errors for $\ell=2$ is defined as 
\begin{align*}
\bigg \{ &x + y + z  \textrm{ } | \textrm{ } \big[ [i; x] , [i-1;y] , [i-2;z] \big] \in \mathsf{F}_1 \times \mathsf{F}_2 \times \mathsf{F}_3 \, \setminus \, \mathcal{I}  \\
& \textrm{ for all } 2 \leq i \leq m-1\bigg\} \, \cup \, \big(\{0,1,\dots, 12\} \, \setminus \, \{5,6,9,10\}\big),
\end{align*}
where $\mathcal{I} \subset \mathsf{F}_1 \times \mathsf{F}_2 \times \mathsf{F}_3$ is the set of impossible matchings.
\end{definition}

\section{Residue Decoding} \label{sec:residue_algorithm}
In this section, we introduce our EC D-LOCO codes and their index-correction (error-correction) algorithm, namely the residue decoding algorithm. For a positive integer $n$, $[n]$ denotes the set $\{0,1,\dots,n-1\}$. 

\begin{lemma} \label{lemma:zero}
Fix a positive integer $R$. Let $N_1, N_2$ be two non-negative numbers such that $N_1$ is divisible by $R$. Let ${\Phi}: \mathbb{Z} \rightarrow [R]$ be the residue map, i.e., ${\Phi}(k) = (k \textup{ } \mathrm{mod} \textup{ } R) \in [R]$. Then,
\[ {\Phi}(N_2) = \begin{cases} {\Phi}(N_2-N_1) , \hspace{0.8cm} \textrm{ if } N_2 \geq N_1, \\
R-{\Phi}(N_1-N_2),  \hspace{0.2cm} \textrm{ otherwise.}
\end{cases}
\]
\end{lemma}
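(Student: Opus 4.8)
The plan is to treat this as a direct consequence of the additivity of residues modulo $R$, splitting into the two stated cases. The single fact I would establish up front is that $\Phi(N_1) = 0$: by hypothesis $R \mid N_1$, so $N_1 \equiv 0 \pmod{R}$, and since $0 \in [R]$ the residue map returns $\Phi(N_1) = 0$ directly.

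For the first case ($N_2 \geq N_1$), I would note that $N_2 - N_1$ is a non-negative integer and that $N_2 - (N_2 - N_1) = N_1 \equiv 0 \pmod{R}$, whence $N_2 \equiv N_2 - N_1 \pmod{R}$. Since both $\Phi(N_2)$ and $\Phi(N_2 - N_1)$ are, by definition, the unique representative in $[R]$ of their common residue class, they must coincide, which is exactly the claim. For the second case ($N_2 < N_1$), I would write $N_2 = N_1 - (N_1 - N_2)$ with $N_1 - N_2$ a positive integer, so that $N_2 \equiv -(N_1 - N_2) \pmod{R}$. The elementary identity I would invoke is that for any integer $k$ with $k \not\equiv 0 \pmod{R}$ one has $(-k) \bmod R = R - (k \bmod R)$; applying it with $k = N_1 - N_2$ yields $\Phi(N_2) = R - \Phi(N_1 - N_2)$. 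Equivalently, and perhaps more transparently for the write-up, I would argue from $\Phi(N_2) + \Phi(N_1 - N_2) \equiv 0 \pmod{R}$: since both summands lie in $\{0, \dots, R-1\}$, their sum lies in $\{0, \dots, 2R-2\}$ and is a multiple of $R$, forcing it to equal $R$ whenever it is nonzero, which again gives the formula.

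The one point requiring care — and the only real obstacle — is the boundary situation $N_1 - N_2 \equiv 0 \pmod{R}$ in the second case. There $\Phi(N_1 - N_2) = 0$ and the stated right-hand side evaluates to $R \notin [R]$, whereas the true value is $\Phi(N_2) = 0$. I would therefore either observe that this degenerate case is excluded in the intended application (where $N_2$ is a received-word formal index and $N_1 - N_2$ is a nonzero index error whose residue is genuinely nonzero), or fold it into the statement by reading $R - \Phi(N_1 - N_2)$ itself modulo $R$. With that caveat noted, the two cases together exhaust all admissible pairs $(N_1, N_2)$ and the argument closes; the whole proof is a short modular-arithmetic verification rather than anything requiring the machinery built up in the preceding sections.
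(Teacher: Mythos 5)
Your proof is correct and takes essentially the same route as the paper, whose entire argument is the one-line assertion that the claim ``follows from the definition of the $\mathrm{mod}$ operation'' together with the identity $\Phi(N_2-N_1) = R-\Phi(N_1-N_2)$. Your additional observation about the degenerate case $N_1 - N_2 \equiv 0 \ (\mathrm{mod}\ R)$ --- where the stated right-hand side evaluates to $R \notin [R]$ while the true value is $0$ --- is a genuine boundary issue that the paper's proof (and its quoted identity) silently assumes away, and your two proposed remedies (excluding it via the intended application, or reading $R - \Phi(N_1-N_2)$ itself modulo $R$) are both sound.
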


\begin{IEEEproof} The proof follows from the definition of the $\mathrm{mod}$ operation. Note that ${\Phi}(N_2-N_1) = R-{\Phi}(N_1-N_2)$.
\end{IEEEproof}

When we apply Lemma~\ref{lemma:zero} in the context of this work, $N_1$ will be the index of the original codeword and $N_2$ will be the formal index of the received word. Recall that the \textit{residue} of an integer is this integer modulo $R$. The lemma then suggests that the index difference $N_2-N_1$ can be recovered from the residue of $N_2$ provided that $R$ is chosen (for fixed $m$ and $\ell$) to guarantee the injectivity of ${\Phi}$ on the set of signed index differences (all index differences are covered). Our coding scheme will incorporate a table, namely the residue table, of positive index errors and their residues. It is important to find the smallest working $R$ in order to maximize the finite-length rates and minimize the required storage for the table.

We outline an algorithm to find the \textit{redundancy metric} $R(m, \ell)$ (as a function of the code length $m$ and maximum allowed run-length $\ell$) which guarantee $100\%$ correction of single-substitution errors via the residue decoding algorithm. For brevity, we denote the redundancy metric by $R(m)$ (or by $R$) if the context clarifies $\ell$ (and also $m$).
\begin{enumerate} \label{algo:findR}
\item[1.] Set $R=2$. 
\item[2.] Check whether the residues of elements in the (extended) set of signed index errors 
\begin{align*} \mathcal{E}_{\textrm{sup}}(m, \ell) \triangleq \{-\mathsf{e} \textrm{ } | \textrm{ } \mathsf{e} &\in \mathcal{E}_{\textrm{sup}}^+(m,\ell)\} \cup \mathcal{E}_{\textrm{sup}}^+(m,\ell) \\
&\cup \{0, N_{\mathrm{D}}(m,\ell)-1\}, \textrm{ } \ell \in \{1,2\},
\end{align*} 
are pairwise distinct (see Remark~\ref{rem:about_R} regarding why we include $N_{\mathrm{D}}(m,\ell)-1$).

\item[3.] In the case of failure, increase $R$ by one and return to Step~$2$ until success occurs. Upon success, output $R$.
\end{enumerate} 

\begin{definition} \label{def:residual} For a codeword $\mathbf{c} \in \mathcal{D}_{m,\ell}$, $(g(\mathbf{c}) \textup{ } \mathrm{mod} \textup{ } R)$ is called its residual index. For a word $\mathbf{w} \in \{A,T,G,C\}^m$, the number $(g(\mathbf{w}, \ell) \textup{ } \mathrm{mod} \textup{ } R)$ is called its formal residual index.
\end{definition}

Next, we introduce EC D-LOCO codes, the primary contribution of this paper.
 
\begin{definition} \label{def:EC-D-LOCO} (EC D-LOCO code): The EC D-LOCO code, denoted by $\mathcal{D}^{\mathsf{Res}}_{m,\ell}$, with parameters $m \geq \ell \geq 1$, is the set
\[ \{\mathbf{c} \in \mathcal{D}_{m, \ell} \textrm{ } | \textrm{ } (g(\mathbf{c}) \textup{ } \mathrm{mod} \textup{ } R) = 0\}, \]
which is the set of all codewords in $\mathcal{D}_{m,\ell}$ with zero residual index. We refer to its elements as EC (D-LOCO) codewords. Its finite-length rate is 
\begin{align} \frac{\big \lfloor \log_2((N(m)-1)/R+1) \big \rfloor}{m+3}.
\end{align}
\end{definition}

\noindent The addition of $3$ in the denominator is due to bridging (to be discussed in the section below). See Table~\ref{tableR2} ($\ell = 2$) for sample redundancy metrics and rates at various lengths.\footnote{These values are based on a relaxed injectivity condition on ${\Phi}$, which will be discussed in Section~\ref{sec:storage-overhead}.}

\begin{table}[t!]
\caption{Adopted Redundancy Metric $R$ and The Corresponding Rates at Various Lengths for $\ell=2$}
\vspace{-0.1em}
\centering
\scalebox{1.1}
{
\begin{tabular}{|c|c|c|c|c|}
\hline
\multicolumn{3}{|c|}{\makecell{$\ell =2$}}  \\
\hline
$m$ & $R(m)$ & \textrm{Rate} \\
\hline
$17$ & $9766$ & $0.9500$ \\
\hline
$27$ & $22045$  & $1.2333$ \\
\hline
$33$ & $45418$ & $1.3333$ \\
\hline
$37$ & $49981$ & $1.3750$ \\
\hline
$47$ & $80993$ & $1.4800$ \\
\hline
$55$ & $114088$ & $1.5344$\\
\hline
$61$ & $137389$ & $1.5625$\\
\hline
\end{tabular}}
\label{tableR2}
\vspace{-0.5em}
\end{table}

\subsection{Bridging Scheme and Balancing} \label{sec:bridge_balance}
We will now describe a $3$-symbol bridging scheme that will play a crucial role in the residue decoding algorithm. Let $\mathbf{d}_1$ and $\mathbf{d}_2$ be two consecutive EC codewords in the DNA data stream, where $\mathbf{d}_1$ ends with the symbol $\Lambda_1$ and $\mathbf{d}_2$ starts with the symbol $\Lambda_2$. For clarity, the sequence order will be $\Lambda_1 \ \Lambda_4 \Lambda_3 \Lambda_5 \ \Lambda_2$ below. The middle bridging symbol $\Lambda_3$ will be the check-sum $\sum_{i=0}^{m-1} a_i$ of the previous codeword according to the correspondence in (\ref{eqn_int}) for $\ell=2,3$. For $\ell =1$, we set $\Lambda_3$ to $\sum_{i=1}^{m-1} a_i$ instead to handle the errors at the right-most symbol of the codeword. We call $\Lambda_3$ a local detection check-sum. 

Define \textit{disparity} as the difference between the number of $GC$ and $AT$ symbols. Complement symbols are said to have opposite disparity. For $\ell = 2,3$, the first bridging symbol $\Lambda_4$ will be set to 
\begin{itemize}
\item[-] the letter with the highest lexicographic index having opposite disparity to $\Lambda_1$ if the previous codeword is complemented due to balancing requirement, and to
\item[-] the letter with the lowest lexicographic index in having opposite disparity to $\Lambda_1$ otherwise.
\end{itemize} 

In order not to create a forbidden pattern, for $\ell = 1$, the first bridging symbol $\Lambda_4$ will instead be set to 
\begin{itemize}
\item[-] the letter with the highest lexicographic index in the set $\{A, T, G, C\} \setminus \{\Lambda_1, \Lambda_3\}$ if the previous codeword is complemented due to balancing requirement, and to
\item[-] the letter with the lowest lexicographic index in the set $\{A, T, G, C\} \setminus \{\Lambda_1, \Lambda_3\}$ otherwise. 
\end{itemize} 

The third bridging symbol $\Lambda_5$ will be set to the letter with the highest lexicographic index in the set $\{A, T\} \setminus \{\Lambda_2\}$ in case $\Lambda_3 \in \{G, C\}$ and in the set $\{G, C\} \setminus \{\Lambda_2\}$ in case $\Lambda_3 \in \{A, T\}$ also to maintain $GC$-content balance. 

Our original D-LOCO balancing technique is based on the idea that each D-LOCO codeword and its complement encode the same message. The codeword to use is that of opposite disparity to the stream disparity prior to it. See \cite[Section~IV]{D-LOCO} and \cite[Section~V]{D-LOCO} for more details about bridging and balancing, respectively.

\begin{remark} \label{rem:about_R} Observe that the sum of the indices of a codeword and its complement is always $N_{\mathrm{D}}(m,\ell)-1$. Therefore, the residual index of a codeword complement is always $((N_{\mathrm{D}}(m,\ell)-1) \textup{ } \mathrm{mod} \textup{ } R) \neq 0$. By including $N_{\mathrm{D}}(m,\ell)-1$ in the set $\mathcal{E}_{\textrm{sup}}(m, \ell)$, we ensure the following: 
\begin{itemize} 
\item[(i)] The complement of an EC codeword has a saved nonzero residual index, allowing us to distinguish an EC codeword from its complement (which is not itself an EC codeword by definition) in case the first bridging symbol cannot be trusted.

\item[(ii)] More crucially, error-free EC codeword complements can be distinguished from erroneous EC codewords (single substitution) by their residual indices in case the local detection check-sum cannot be trusted.
\end{itemize} 
Having (i) allows us to use the entire EC D-LOCO codebook (up to floor the cardinality) for distinct messages with EC D-LOCO balancing, i.e., precisely $2^{\lfloor \log_2((N(m)-1)/R+1) \rfloor}$-many EC codewords are at our disposal, whereas D-LOCO balancing requires to give up on ``half of the D-LOCO codebook'' for distinct messages.
\end{remark}

Based on that, a DNA strand consisting of $K$ EC D-LOCO codewords of odd length $m$ will have disparity in the range $[-m-1, m+1]$, achieving $GC$-content $40\%-60\%$ for $K \geq 5$.

\begin{definition} An EC D-LOCO codeword $\mathbf{d}$ in a 
DNA strand together with the three following bridging symbols is called the concatenable of $\mathbf{d}$ (or just a concatenable if $\mathbf{d}$ need not be emphasized).
\end{definition}

\vspace{-0.5em}
\subsection{Steps of the Residue Decoding Algorithm} 
The encoding of EC D-LOCO codes is as follows: \\
Let $R$ be the redundancy metric for fixed $m$ and $\ell \in \{1,2,3\}$. \\
\textbf{Input.} $\big \lfloor \log_2((N_{\mathrm{D}}(m,\ell)-1)/R+1) \big \rfloor$-bit binary data $\mathbf{b}$.  \\
\textbf{Output.} $\mathbf{d}^{\mathrm{c}}$ for some EC D-LOCO codeword $\mathbf{d} \in \mathcal{D}^{\mathsf{Res}}_{m,\ell}$.
\begin{enumerate}
\item[E1.] Multiply $\mathrm{decimal}(\mathbf{b})$ by $R$ and then encode it by the D-LOCO encoder $\mathrm{E}$ \cite[Algorithm~1]{D-LOCO} to obtain the codeword $\mathbf{d}=\mathrm{E}(\mathrm{binary}(\mathrm{decimal}(\mathbf{b})*R))$.
\item[E2.] Complement the codeword $\mathbf{d}$, if necessary, to control the $GC$-content based on EC D-LOCO balancing. We denote the possibly complemented codeword by $\mathbf{d}^{\mathrm{c}}$.
\end{enumerate}
Here, $\mathbf{d}^{\mathrm{c}}$ is either $\mathbf{d}$ itself or its complement $\overline{\mathbf{d}}$, and thus $(\mathbf{d}^{\mathrm{c}})^{\mathrm{c}}=\mathbf{d}$. Note that $\overline{\mathbf{d}}$ should satisfy $\overline{\mathbf{d}} = \mathrm{E}(\mathrm{binary}(N(m)-1-g(\mathbf{d})))$ and $g(\overline{\mathbf{d}}) \equiv N(m)-1 \textup{ } (\mathrm{mod} \textup{ } R)$.
\par
The residue decoding of EC D-LOCO codes is as follows: 
\\
\textbf{Assumption 1.} Suppose that at most one substitution error occurs at some symbol of the codeword $\mathbf{d}^{\mathrm{c}}$ or at the following three bridging symbols (i.e., in the concatenable of $\mathbf{d}^{\mathrm{c}}$). We denote the (possibly) erroneous word by $\mathbf{w}$.  \\
\textbf{Input.} A DNA sequence of length $m+3$ that differs at one location at most from the concatenable of $\mathbf{d}^{\mathrm{c}}$ for some EC codeword $\mathbf{d}$. Here, the received word $\mathbf{w}$ is retrieved as the segment of the $m$ symbols from the left where the following $3$ symbols are (possibly erroneous) bridging symbols. \\
\textbf{Output.} The index $g(\mathbf{d})$. 
\begin{enumerate}
\item[D0.] Compute the formal residual index $(g(\mathbf{w},\ell) \textup{ } \mathrm{mod} \textup{ } R)$. 
\item[D1.] There are two possible cases:
\begin{itemize} 
\item[Case 1.] The received word $\mathbf{w}$ is a D-LOCO codeword (i.e., free of runs of length $> \ell$) so that $g(\mathbf{w},\ell)= g(\mathbf{w})$:
\begin{itemize}
\item[(1.1)] If the residual index $(g(\mathbf{w}) \textup{ } \mathrm{mod} \textup{ } R)$ is zero, then $\mathbf{w}$ has no substitution error and the encoding codeword $\mathbf{d}$ was not complemented during encoding (Step~E2).  We recover $\mathbf{d}=\mathbf{w}$ and return the index $g(\mathbf{w})$.

\item[(1.2)]  If the residual index $(g(\mathbf{w}) \textup{ } \mathrm{mod} \textup{ } R)$ is equal to $((N(m)-1) \textup{ } \mathrm{mod} \textup{ }R)$, then $\mathbf{w}$ has no substitution error and the encoding codeword $\mathbf{d}$ was complemented during encoding (Step~E2). We recover $\mathbf{d}=\overline{\mathbf{w}}$ and return the index $g(\overline{\mathbf{w}})$ (which is $N(m)-1-g(\mathbf{w}))$. 

\item[(1.3)] If $(g(\mathbf{w}) \textup{ } \mathrm{mod} \textup{ } R) \neq 0$ and $(g(\mathbf{w}) \textup{ } \mathrm{mod} \textup{ } R) \neq ((N(m)-1) \textup{ } \mathrm{mod} \textup{ } R)$, then we conclude that $\mathbf{w}$ is erroneous (whereas the first bridging symbol is guaranteed to be true). We infer the correct information from the first bridging symbol about whether the previous codeword is complemented or not for $\ell \in \{2,3\}$.\footnote{For $\ell = 1$, if the local check-sum is satisfied, the right-most symbol of $\mathbf{w}$ must be the erroneous one. In which case, we may not infer whether $\mathbf{d}$ is complemented or not for balancing even if the first bridging symbol is guaranteed to be true. In this case, $\mathbf{d}$ is set to the unique EC codeword in the set of $\{w_{m-1}w_{m-2} \cdots w_1\Pi, \overline{w_{m-1}w_{m-2} \cdots w_1 \Pi} \,|\, \Pi \in \{A,T,G,C \} \setminus \{w_0, w_1\}\}$ of at most six codewords, Step~D2 is skipped, and we proceed with Step~D3 in Section~\ref{sec:storage-overhead}. See Remark~\ref{rem:ell1_subtlety} for the uniqueness of such $\mathbf{d}$.} Complement the received codeword $\mathbf{w}$, if necessary, based on the (trusted) first bridging symbol. We denote the resulting codeword by $\mathbf{w}^{\mathrm{c}}$ and proceed with Step~D2.
 \end{itemize}
 \item[Case 2.] The received word $\mathbf{w}$ is not a D-LOCO codeword. Here, we apply the correction step in Remark~\ref{rem:correcting_words_2} (for illustration, see Example~\ref{exp:correction}, Scenario 4 and Example~\ref{exp2:correction}). Proceed with Step~D3 in Section~\ref{sec:storage-overhead}.
\end{itemize}

\begin{figure*}[ht!]
\vspace{-1em}
\center
\includegraphics[trim={0.0in 0.0in 0.0in 0.0in}, width=5.5in]{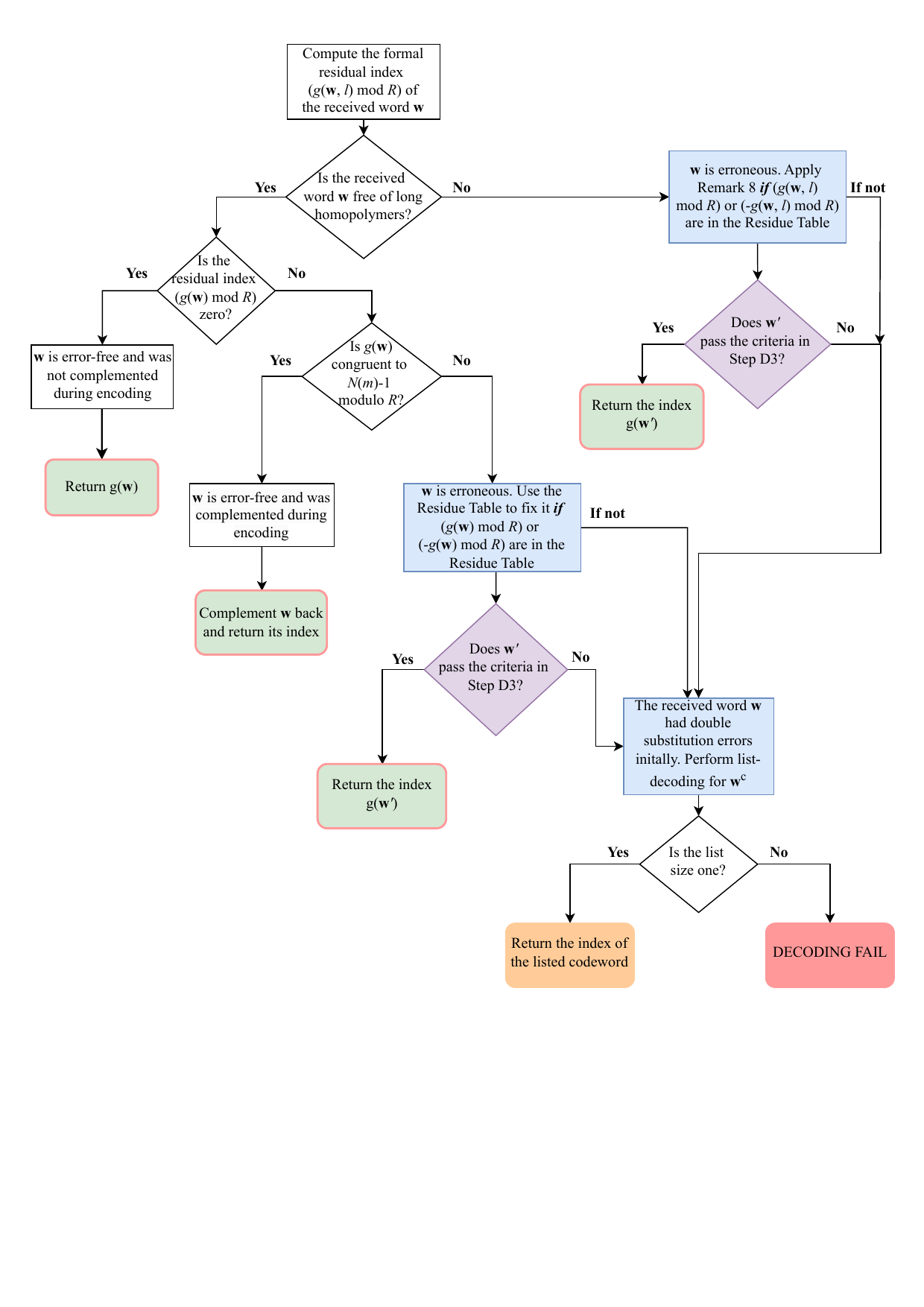}
\vspace{-14em}
\caption{A flowchart illustrating the steps of the residue decoding algorithm. Green boxes indicate a successful decoding under Assumption~1. The orange box indicates a successful decoding under Assumption~2 (see Section~\ref{sec:correcting_double_errors}). Red frames around green boxes as well as the red box (notably more probable than scenarios indicated by red frames) indicate failed decoding if a word under Assumption~2 (with two substitution errors) is inputted.}
\label{fig:algo}
\vspace{-0.5em}
\end{figure*}

\item[D2.] Compute the residual index $(g(\mathbf{w}^{\mathrm{c}}) \textup{ } \mathrm{mod} \textup{ } R)$. Search the residue table for both $(g(\mathbf{w}^{\mathrm{c}}) \textup{ } \mathrm{mod} \textup{ } R)$ and $(-g(\mathbf{w}^{\mathrm{c}}) \textup{ } \mathrm{mod} \textup{ } R)$. If $(g(\mathbf{w}^{\mathrm{c}}) \textup{ } \mathrm{mod} \textup{ } R)$ is found, decrease $g(\mathbf{w}^{\mathrm{c}})$ by the (unique) positive index error associated with the residue $(g(\mathbf{w}^{\mathrm{c}}) \textup{ } \mathrm{mod} \textup{ } R)$. If $(-g(\mathbf{w}^{\mathrm{c}}) \textup{ } \mathrm{mod} \textup{ } R)$ is found, increase $g(\mathbf{w}^{\mathrm{c}})$ by the (unique) positive index error associated with the residue $(-g(\mathbf{w}^{\mathrm{c}}) \textup{ } \mathrm{mod} \textup{ } R)$. This will result in the correct index $g(\mathbf{d})$ under Assumption~1.
\end{enumerate} 

\noindent We can then find the binary data $\mathrm{binary}(g(\mathbf{d})/R)$ of length $\big \lfloor \log_2((N(m)-1)/R+1) \big \rfloor$ for the corrected index. However, we do not yet specify this as a final step since the residue decoding algorithm will be later modified for optimal storage (see Step~OD2 in Section~\ref{sec:storage-overhead}) and extended by additional steps to handle double substitutions (see Step~D3 in  Section~\ref{sec:storage-overhead} and the steps in Section~\ref{sec:correcting_double_errors}). Fig.~\ref{fig:algo} depicts a flowchart of the complete algorithm to offer insight into our approach.

We now give various scenarios regarding Step~D2 and Case~2 in Step~D1, demonstrating the conceptual framework behind our correction algorithm. 

\begin{example} \label{exp:correction} For $m=6, \, \ell= 1$, we find $R = 127$ by the algorithm outlined in Section~\ref{sec:residue_algorithm}. Note that  $N(6)=972$ and $\lfloor \log_2((N(6)-1)/R + 1) \rfloor = \lfloor \log_2(971/127 + 1) \rfloor = 3$. Thus, we can encode 3-bit binary data using the following EC codewords in $\mathcal{D}^{\mathsf{Res}}_{6,1}$:
\[ ATATAT, \, AGTCAG, \, TATGAC, \, \cdots, \,  CTCGCT\] with respective indices $0, \, 127, \, 254, \, \cdots, \, 889$. No codeword need to be removed due to flooring in this case. Let $\mathbf{b}=001$ be the binary data, then $N_1=127*\mathrm{decimal}(001)=127$ and $AGTCAG$ encodes the data. By Definition~\ref{def:indexset1}, we have
\begin{align*} \mathcal{E}_{\textrm{sup}}^+(6,1) &= \{0, 1 , 2 , 3 , 4 , 6 , 7 , 9 , 10 , 12 , 18 , 21 , 27 , 30 , 36, 54,  \\
63 , & 81 , 90 , 108 , 162 , 189 , 243 , 270, 324, 486, 567, 810\}, 
\end{align*}
and the set of residual indices, denoted by ${\Phi}(\mathcal{E}_{\textrm{sup}}^+(6,1))$, is 
\begin{align*}  
{\Phi}(\mathcal{E}_{\textrm{sup}}^+(6,1)) &= \{0, 1 , 2 , 3 , 4 ,  6 , 7, 9 , 10 , 12 , 16, 18 , 21  , 27 , 30 , \\
35, & 36, 48, 54 , 59, 62, 63, 70, 81 , 90, 105, 108, 116\}. 
\end{align*}
 
\noindent In the following scenarios, we write erroneous symbols in bold.
\begin{enumerate}
\item
Suppose first that we receive $\mathbf{T}GTCAG$. Note that $g(\mathbf{T}GTCAG)= 370$, and thus the residual index is ${\Phi}(370)=116$. There is a single index error in $\mathcal{E}_{\textrm{sup}}^+(6,1)$, namely 243, congruent to ${\Phi}(370)$. Hence, the index of the encoding codeword is $370-243=127$.
\item
Suppose now that we receive $AG\mathbf{A}CAG$. Note that $g(AG\mathbf{A}CAG)= 100$, and thus the residual index ${\Phi}(100)=100$. There is no index error in $\mathcal{E}_{\textrm{sup}}^+(6,1)$ congruent to ${\Phi}(100)$. However, there is a (single) index difference, namely $27$, congruent to $-{\Phi}(100)$. This is due to the fact that we only store the positive index errors in the residue table.
Hence, we conclude that the index has decreased upon storage and the corrected index is $100+27=127$.
\item
Suppose that we receive the non-codeword $AG\mathbf{C}CAG$. It can be detected from the local detection check-sum ($=G$) that the index got increased upon storage. We have $g(AG\mathbf{C}CAG, 1)= 163$, and thus the formal residual index is ${\Phi}(163)=36$. The index difference $36 \in \mathcal{E}_{\textrm{sup}}^+(6,1)$ is congruent to ${\Phi}(163)$. Hence, the index of the encoding codeword is $163-36=127$.
\item
Suppose now that $TATGAC$ encodes a different message and we receive $TA\mathbf{A}GAC$, which is not a codeword. We have $g(TA\mathbf{A}GAC, 1)=254$ and the formal residual index is zero. Based on this, we conclude that the encoding codeword is the EC codeword of index $254$. We can of course just proceed with the index without obtaining the codeword in this case. 
\end{enumerate}
\end{example}

\begin{remark} \label{rem:formal_ham1} Formal indices do not respect the lexicographic ordering of words. However, in case we compare a codeword $\mathbf{c}$ and a word $\mathbf{w}$ with $\mathbf{w} > \mathbf{c}$ that are of Hamming distance $1$, there is a consistent observation that facilitates error correction. For instance, for $\ell=2$ and fixed $0 \leq i \leq m$, the contribution at $i$, $g_{i}$ for brevity, can increase by at least $3N(i)/4$, whereas $g_{i-1}$ and $g_{i-2}$ can decrease by at most $3N(i-1)/4$ and $3N(i-2)/4$, respectively. This implies that for $\mathbf{c} \in \mathcal{D}_{m,\ell}$ and a word $\mathbf{w}$ of Hamming distance $1$ to $\mathbf{c}$, if $\mathbf{w} > \mathbf{c}$, then $g(\mathbf{w}, \ell) \geq g(\mathbf{c})$. Otherwise, we have $g(\overline{\mathbf{w}}, \ell) \geq g(\overline{\mathbf{c}})$.
\end{remark}

\begin{remark} \label{rem:correcting_words_2} Upon storage, possible erroneous changes are: 
\begin{itemize}
\item[(i)] Codeword to codeword of higher index  
\item[(ii)] Codeword to non-codeword of higher or equal index 
\item[(iii)] Codeword to codeword of lower index
\item[(iv)] Codeword to non-codeword of lower or equal index.
\end{itemize}
A non-codeword here means a word that is not in $\mathcal{D}_{m,\ell}$. Once an exhaustive set of index errors for (i) is attained, then it suffices to take the negatives of its elements to cover (iii). However, the values for the last case (iv) cannot be covered by taking the negatives of the values from the first two cases as the optimization process in Section~\ref{sec:interdependence_for_ell=2} for the set of non-negative index errors is done under the fact that the encoding word is a codeword. We can deal with this subtlety without enlarging the size of the residue table and while maintaining guaranteed correction of single substitutions as follows. 
\\
Suppose we receive a non-codeword $\mathbf{w}$. If needed, we complement $\mathbf{w}$ based on the (trusted) first bridging symbol, obtaining an erroneous EC codeword whose erroneous symbol is one of the identical 
symbols in a run of length $> \ell$. The local detection check-sum will then tell us if the erroneous symbol needs be corrected to a symbol of lower or higher index. The former case can be handled as Step~D2. In the latter case, we complement the erroneous EC codeword as a correction trick, and find the formal index $N_2$ of its complement. We search the table for the residue $((N_2 - (N(m)-1)) \textup{ } \mathrm{mod} \textup{ } R)$ to neutralize the effect of the complementing trick. Remark~\ref{rem:formal_ham1} proves that this (modified) residue, if not $0$, will be found in the table since the index change from the codeword complement to the erroneous codeword complement is non-negative.
This will tell us how to correct the formal index $N_2$. After correcting $N_2$ to $N_2'$, we return the number $N(m)-1-N_2'$, where $N_2'$ is the index of the complement of the encoding EC codeword. 

We provide the pseudo-code for the case (iv) in Algorithm~\ref{dec_noncodeword} for $\ell \in \{2,3\}$, where the DNA symbols are overloaded with the respective integers equivalents.
\end{remark}

\begin{example} \label{exp2:correction} Under the same set-up of Example~\ref{exp:correction}, suppose~that $TATGAC$ encodes the message and we receive the word $TAT\mathbf{A}AC$, where one of $w_1$ or $w_2$ is erroneous.  The local detection check-sum ($=A$) indicates that, upon storage, one of $c_1$ or $c_2$ is replaced by a symbol ($=A$) of lower index. First, note that $g(TAT\mathbf{A}AC, 1)=245$, and ${\Phi}(245)=118$ is not in ${\Phi}(\mathcal{E}_{\textrm{sup}}^+(6,1))$. Following the steps outlined in Remark~\ref{rem:correcting_words_2}, we first complement the word $TAT\mathbf{A}AC$. For its complement, we have $g(GCG\mathbf{C}CA, 1)=729$ and ${\Phi}(729)=94$. The index difference $12 \in \mathcal{E}_{\textrm{sup}}^+(6,1))$ is congruent to $94-(N(6)-1)$. Hence, the index $729$ is corrected to $729-12=717$ (corresponding to the codeword $GCGTCA$). Finally, we recover the index of the encoding EC codeword as $(N(6)-1)-717=254$. 
\end{example}

\begin{algorithm}[t!]
{
\caption{Decoding EC D-LOCO Codes}
\begin{algorithmic}[1]
\State \textbf{Input:} A DNA sequence $w_{m-1}w_{m-2} \dots  w_{-2} w_{-3}$ of length $m+3$ that differs at one location at most from the concatenable of $\mathbf{d}^{\mathrm{c}}$ for some EC codeword $\mathbf{d}$. Here, the received word $\mathbf{w}$ is $w_{m-1}w_{m-2} \dots  w_{0}$ where the following $3$ symbols $w_{-1}, w_{-2} ,w_{-3}$ are (possibly erroneous) bridging symbols.
\State Compute the formal index $g(\mathbf{w},\ell)$ and the residual index $(g(\mathbf{w},\ell) \textup{ } \mathrm{mod} \textup{ } R)$ in a parallel manner (see Remark~\ref{rem:laten1}).
\State \textbf{if} $w$ is a D-LOCO codeword \textbf{then}
\State \hspace{2ex} $\cdots$
\State \textbf{else}
\State \hspace{2ex} Search for the longest run in $\mathbf{w}$, and set the repeating symbol to $\Lambda$.
\State \hspace{2ex} Find the symbol $\Lambda'$ so that $\sum_{i=0}^{m-1} w_i - \Lambda + \Lambda' \equiv w_{-2} \textup{ } (\textrm{mod} \textup{ } 4).$
\State \hspace{2ex} \textbf{if} $\Lambda > \Lambda'$ \textbf{then}
\State \hspace{4ex} $\cdots$
\State \hspace{2ex} \textbf{else}
\State \hspace{4ex} Compute the formal index $g(\overline{\mathbf{w}},\ell)$ and the residual index $(g(\overline{\mathbf{w}},\ell) \textup{ } \mathrm{mod} \textup{ } R)$ in a parallel manner.
\State \hspace{4ex} Search the residue table for $(g(\overline{\mathbf{w}},\ell) - (N(m)-1)) \textup{ } \mathrm{mod} \textup{ } R)$ and set $N'$ to the corresponding index error.
\State \hspace{4ex} \textbf{if} $w_{-1}=A \lor  w_{-1}=G$ \textbf{then} 
\State \hspace{6ex} Output $N(m)-1 -(g(\overline{\mathbf{w}},\ell)-N')$. 
\State \hspace{4ex} \textbf{else} 
\State \hspace{6ex} Output $g(\overline{\mathbf{w}},\ell)-N'$. 
\State \hspace{4ex} \textbf{end if} 
\State \hspace{2ex} \textbf{end if} 
\State \textbf{end if} 
\end{algorithmic}
\label{dec_noncodeword}
}
\end{algorithm}

\begin{remark} \label{rem:hamming} The Hamming distance between two EC D-LOCO codewords $\mathbf{d_1}, \mathbf{d_2} \in \mathcal{D}^{\mathsf{Res}}_{m,\ell}$ is guaranteed to be $> 2$ if
\begin{itemize}
\item[a)] there is a D-LOCO codeword $\mathbf{w}$ that is of Hamming distance $1$ to both $\mathbf{d_1}$ and $\mathbf{d_2}$, or
\item[b)] there is a non-codeword $\mathbf{w}$ that is of Hamming distance $1$ to both $\mathbf{d_1}$ and $\mathbf{d_2}$, and whose formal index is greater than or equal to both $g(\mathbf{d_1})$ and $g(\mathbf{d_2})$ or smaller than or equal to both $g(\mathbf{d_1})$ and $g(\mathbf{d_2})$. 
\end{itemize}
These can be shown via the run-length constraint and the choice of $R$. Unless otherwise proved, it is possible to have two EC codewords $\mathbf{d_1}$ and $\mathbf{d_2}$ of Hamming distance $2$, where the above conditions fail for any word $\mathbf{w}$ in the middle. However, this does not result in a decoding failure (under Assumption~1) in case such (a non-codeword) $\mathbf{w}$ is received since the trusted local detection check-sum will specify whether the erroneous symbol will be corrected to a higher or to a lower symbol, resolving the ambiguity.
\end{remark}

Next, we comment on the residue decoding latency.

\begin{remark} \label{rem:laten1} In Step~D0, the formal index and its residue can be computed in a parallel manner provided that the residues of cardinalities are stored. This is possible since the index computation involves $m-1$ summations so that one can realize the formal residual index as an additive combination of residues of cardinalities by additivity of the $\mathrm{mod}$ operation. 
\end{remark}

\section{Storage Overhead and Capacity Approachability} \label{sec:storage-overhead}

\subsection{Achieving Optimal Storage for Adopted Redundancy} 
In this section, we follow a standard method to decrease storage overhead and illustrate how to update Steps~D1 and ~D2 in Section~\ref{sec:residue_algorithm} accordingly. By studying the interdependence of changes in symbol contributions, we are able to shrink the height of the residue table. We will now shrink its width. 

Note that in the case of a single-substitution error, it suffices to locate the erroneous symbol to fix it since we have the check-sum (to be trusted) to tell what the right symbol must be. In order words, the index error is not necessary for index correction provided that it corresponds to a unique location of occurrence for the relevant residual index. In the residue table, for such index errors, we simply store the location of occurrence of errors in place of index errors. This also means we only need to distinguish index errors corresponding to distinct locations, relaxing the injectivity condition of ${\Phi}$ on $\mathcal{E}_{\textrm{sup}}(m, \ell)$. Experiments show that all index errors except for two values come from a substitution at a unique location for $\ell =2$ and odd $m \in [5, 61]$. The two exceptions in positive index errors are $3$ and $12$. For instance, 
\begin{align*} 
g(c_m\cdots GTC\mathbf{C}) & -g(c_m\cdots GTC\mathbf{A})=3, \\
g(c_m \cdots GT\mathbf{C}C) & -g(c_m \cdots GT\mathbf{G}C)=3, \\
g(c_{m-1} \cdots C\mathbf{C}A) & -g(c_{m-1} \cdots C\mathbf{A}A)=12 \textrm{, and}\\
g(c_{m-1}\cdots CT\mathbf{G}AA) & -g(c_{m-1}\cdots CT\mathbf{T}AA)=12.
\end{align*}

\vspace{-1em}
\begin{remark} \label{rem:ell1_subtlety} For $\ell = 1$, suppose that the received word is erroneous but the local check-sum is satisfied. This means that the erroneous symbol must be at the right-most location. In which case, $\mathbf{d}$ lies in the set $\{w_{m-1}w_{m-2} \cdots w_1\Pi, \overline{w_{m-1}w_{m-2} \cdots w_0 \Pi}\,|\, \Pi \in \{A,T,G,C \}\setminus \{w_0, w_1\}\}$ of at most six codewords. Only one of these codeword is an EC codeword since the residues of elements in $\{\pm, 3, \pm 2, \pm 1, 0, N(m,1)-1\} \subseteq \mathcal{E}_{\textrm{sup}}(m, 1)$ are pairwise distinct.
\end{remark}

\begin{enumerate}
\item[OD2.] Search the residue table for the residual index $(g(\mathbf{w}^{\mathrm{c}}) \textup{ } \mathrm{mod} \textup{ } R)$ as well as $(-g(\mathbf{w}^{\mathrm{c}}) \textup{ } \mathrm{mod} \textup{ } R)$. If $(g(\mathbf{w}^{\mathrm{c}}) \textup{ } \mathrm{mod} \textup{ } R)$ is found, then
\begin{itemize}
\item[-] if it is unlocatable, apply Step~D2 as in Section~\ref{sec:residue_algorithm} and encode the modified index (say, resulting in the word $\mathbf{w}'$), and

\item[-] if it is locatable, update the symbol of $\mathbf{w}^{\mathrm{c}}$ at the location corresponding to the residual index $(g(\mathbf{w}^{\mathrm{c}}) \textup{ } \mathrm{mod} \textup{ } R)$ in such a way that the local detection check-sum is satisfied. Denote the outcome by $\mathbf{w}'$ and proceed with Step~D3. The same is done for $(-g(\mathbf{w}^{\mathrm{c}}) \textup{ } \mathrm{mod} \textup{ } R)$. If neither is found (under the possibility of two or more substitution errors), proceed with Step~D4 in Section~\ref{sec:correcting_double_errors}. 
\end{itemize}
Case~2 in Step~D1 and the associated Remark~\ref{rem:correcting_words_2} can be adjusted in a similar manner. We call Step~D1 by Step~OD1 henceforth.

\item[D3.]  Check if the resulting word $\mathbf{w}'$ is a codeword with check-sum equal to that of $\mathbf{d}$, has Hamming distance $1$ to $\mathbf{w}$, and its residual index $(g(\mathbf{w}') \textup{ } \mathrm{mod} \textup{ } R)$ is zero. If so, which will be the case under Assumption~1, return $g(\mathbf{w}')$. If not, proceed with Step~D4 in Section~\ref{sec:correcting_double_errors}.
\end{enumerate}

\begin{remark} \label{rem:laten2}
In Step~OD1 and Step~OD2, once the location $i$ of substitution is determined from the residue table, one can simply subtract the $i^{\textrm{th}}, (i-1)^{\textrm{th}}, \dots, (i-\ell)^{\textrm{th}}$ symbol contributions of the erroneous codeword from the erroneous index and add those of the corrected (code)word. Therefore, the D-LOCO decoder runs only one time of $O(m)$ for the computation of (possibly erroneous) index, while the index correction requires only $O(\ell)$ steps, ensuring limited latency.
\end{remark}

For the complete decoding algorithm, we store 
\begin{enumerate}
\item[(1)] the cardinalities $\{3N(i)/4 \textrm{ } | \textrm{ } 0 \leq i \leq m-1\}$,
\item[(2)] the residue table, and
\item[(3)] the residues of cardinalities along with ${\Phi}(N(m)-1)$.
\end{enumerate}
(1) is required for index computation and (3) allows us to compute the residual index without computing the index itself. The required storage for (2) dominates over the other two. For $\ell = 2$, the number of positive index errors is around $20(m-2)$. For instance, the storage for (2) when $m=37$ is approximately
\begin{align*}
[20 \times (m-2)  &\times \big(\lfloor \log_2(m) \rfloor + 1 + \lfloor \log_2(R) \rfloor + 1 \big)\big] \\
= [20 \times 35 & \times (6 + 16)] \textrm{ } \textrm{bits} = 1.925 \textrm{ } \textrm{kB},
\end{align*}
and the total storage is about $2.17 \textrm{ } \textrm{kB}$ (see Table~\ref{tableR2} for $R(37)$).

\begin{table*}[ht]
\vspace{-1em}
\caption{A Comparison Between the Coding Scheme in \cite{nguyen_etal} and Ours Using EC D-LOCO Codes. $^{\textup{\tiny${\bigstar}$}}$ Indicates a Predicted Value Based on Section~\ref{sec:model}. Our Segment Length Is $m+3$}
\center
\scalebox{1.05}
{
\begin{tabular}{|c|c|c|c|c|c|} 
\hline
\multicolumn{6}{|c|}{\makecell{\textbf{Nguyen et al.'s Work} \cite{nguyen_etal}}} \\
\hline
\textrm{Strand (segment) length} & \textrm{$GC$-content} & \textrm{Max. run-length } & \textrm{Correction type per strand} & \textrm{Rate} & \textrm{Storage} \\
 \hline
$80 \textrm{ } (80)$ & $40\%-60\%$ & $3$ & \textrm{Single edit} & $1.3750$ & $\approx 0.50$ kB \\
 \hline
$146 \leq * \leq 148 \textrm{ } (*)$ & $40\%-60\%$ & $3$ & \textrm{Single edit} & $1.5946 \leq * \leq 1.6164$ & $\approx 2$ kB \\
 \hline 
\multicolumn{6}{|c|}{\makecell{\textbf{Our design using $\mathcal{D}^{\mathsf{Res}}_{m,2}$}}}  \\ 
\hline
$200 \textrm{ } (40)$ & $40\%-60\%$ & $2$ & \textrm{$5$ segmented subs. errors} & $1.3750$ & $\approx 2.17$ kB \\
 \hline
 $320 \textrm{ } (64)$ & $40\%-60\%$ & $2$ & \textrm{$5$ segmented subs. errors} & $1.5625$ & $\approx 4.13$ kB \\ 
 \hline
$400 \textrm{ } (80)$ & $40\%-60\%$ & $2$ & \textrm{$5$ segmented subs. errors}& $1.6250^{\textup{\tiny${\bigstar}$}}$ & $\approx 5.59^{\textup{\tiny${\bigstar}$}}$ kB \\ 
\hline
\end{tabular}}
\label{table:cmp}
\vspace{-1em}
\end{table*}

\begin{figure}[ht]
\vspace{-1em}
\center
\includegraphics[trim={0.0in 0.0in 0.0in 0.0in}, width=3.5in]{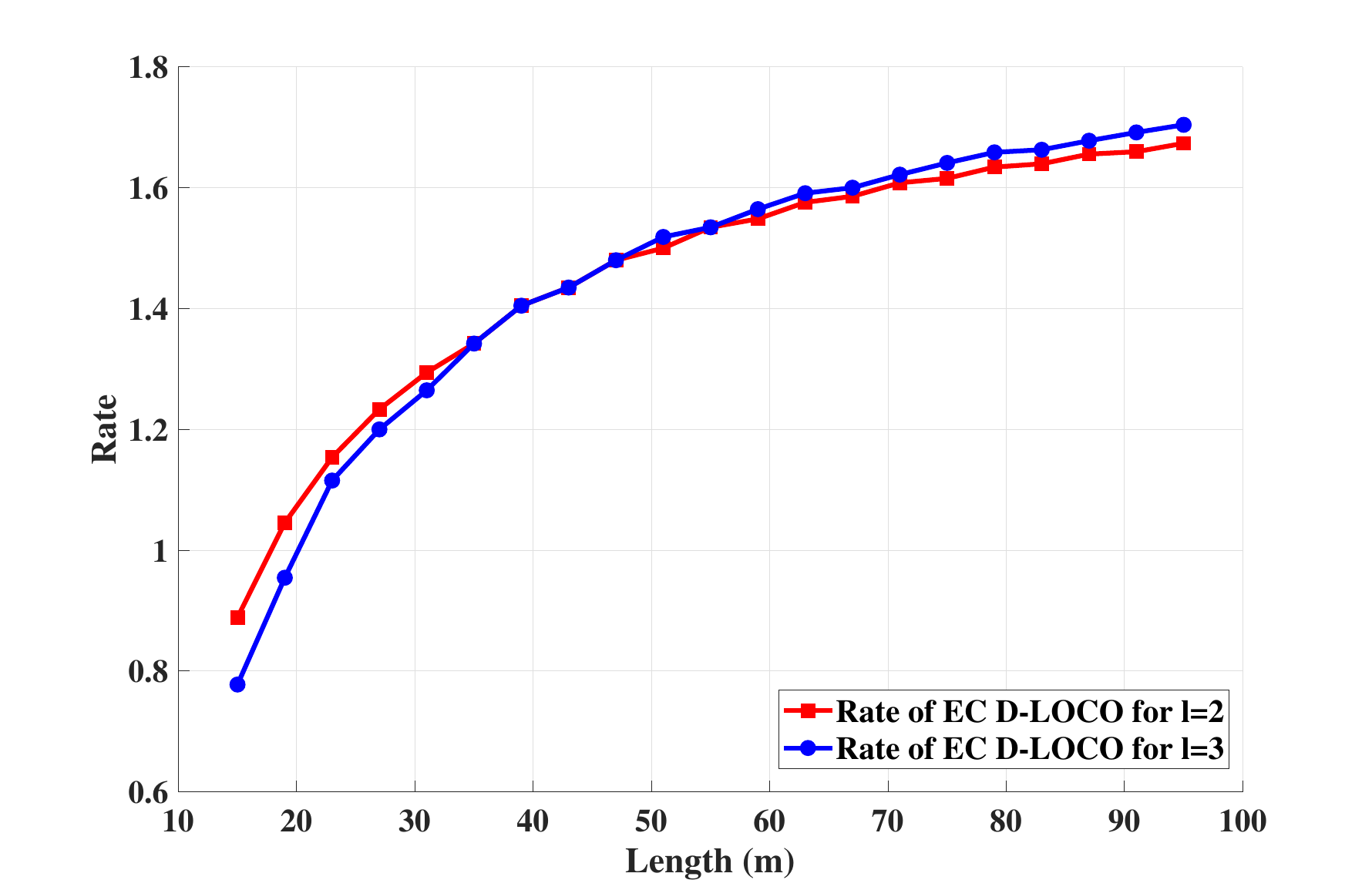}
\vspace{-1.5em}
\caption{Projected rates of EC D-LOCO codes $\mathcal{D}^{\mathsf{Res}}_{m,2}$ and $\mathcal{D}^{\mathsf{Res}}_{m,3}$ based on the quadratic modeling of $R(m,2)$ and $R(m,3)$. The rates on the red (blue) curve are actual values for $m \in [15,59]$ ($m \in [15,27]$).
}
\label{fig:cmp}
\vspace{-0.5em}
\end{figure}

\subsection{Modeling the redundancy metric} \label{sec:model}
In our prior work \cite{D-LOCO}, we showed that D-LOCO codes with $3$-symbol bridging possess error-detection property for single-substitution errors. Moreover, they are capacity-achieving as they have only $1$-bit penalty due to balancing. The capacity $C^{(2)}$, which is the base-$2$ logarithm of the largest real positive eigenvalue of the adjacency matrix of the finite-state transition diagram (representing the infinitude of a sequence forbidding runs of length $> 2$), is $1.9226$~\cite{immink_cai}. 

For EC D-LOCO codes $\mathcal{D}^{\mathsf{Res}}_{m,2}$, we performed curve-fitting for the data points $(m,R(m))$, where $m \in [5,61]$ is odd. The quadratic fitting polynomial turned out to be 
$35.43 \, m^2 + 110.42 \, m -2898.53$
whose root mean squared error (RMSE) normalized by the mean of data is $0.0945$. Based on this model, the asymptotic rate of $\mathcal{D}^{\mathsf{Res}}_{m,2}$ is
\begin{align}  \label{eq:capacity}
\lim_{m \rightarrow \infty} & \frac{\big \lfloor \log_2((N_{\mathrm{D}}(m,2)-1)/R(m)+1) \big \rfloor}{m+3} \nonumber \\
=&  \lim_{m \rightarrow \infty} \frac{ \log_2(N_{\mathrm{D}}(m,2)/R(m)) }{m} \nonumber\\
=&  \lim_{m \rightarrow \infty} \frac{ \log_2(N_{\mathrm{D}}(m,2))-\log_2(R(m)) }{m} \nonumber \\
=&  \lim_{m \rightarrow \infty} \frac{ \log_2(N_{\mathrm{D}}(m,2)) -2\log_2 m }{m} \nonumber \\
=& \lim_{m \rightarrow \infty} \frac{ \log_2 (N_{\mathrm{D}}(m,2) )}{m}, 
\end{align}
which is equal to $C^{(2)}$. Hence, $\mathcal{D}^{\mathsf{Res}}_{m,2}$ is expected to be capacity-approaching. 

\begin{remark} \label{rem:redundancy}
We call the redundancy $\log_2 (R(m, \ell))$ due to the adopted redundancy metric $R(m, \ell)$ the subcode or allocation redundancy. We achieve error correction for D-LOCO codes, not by adding redundant symbols but, by allocating bits from the binary data that a corresponding D-LOCO codeword can encode (without the correction property). As a result, we have a reduction by $\log_2 (R(m, \ell))$ in the numerator of the rate of EC D-LOCO codes in (\ref{eq:capacity}) above.
\end{remark}

We then summarize our findings for the case $\ell=3$. In this case, four symbol contributions are subject to change after a substitution at a location $i\geq 3$. The number of different values for $g_i^{\Delta}$, $g_{i-1}^{\Delta}$, $g_{i-2}^{\Delta}$, and $g_{i-3}^{\Delta}$ are $9$, $8$, $5$, and $3$, respectively. Due to the higher cardinality ($\approx 52(m-2)$ after optimization) of the set of index errors, the rates for $\mathcal{D}^{\mathsf{Res}}_{m,3}$ start lower than those of $\mathcal{D}^{\mathsf{Res}}_{m,2}$. Based on the quadratic (polynomial) modeling of the data $(m, R(m,3))$, $m \in [5,27]$, we find that the projected rates of $\mathcal{D}^{\mathsf{Res}}_{m,3}$ catch the actual rates of $\mathcal{D}^{\mathsf{Res}}_{m,2}$ at $m=35$, and  steadily surpass them after $m=55$ (see Fig.~\ref{fig:cmp}). 

\begin{remark}
For low lengths $m$, the set of index differences in Definition~\ref{def:set_difference} yields a notable storage gain for $\ell=2, 3$. This suggest that it is important to tackle this set in order to better analyze the rate-storage trade-off that $\ell \in \{2,3\}$ offers as well as achieving maximum EC D-LOCO rates.  
\end{remark}

\begin{remark} \label{rem:reconfigure} 
Reconfigurability of EC D-LOCO codes can be achieved at a storage and rate cost, for instance, by switching the run-length from $\ell$ to $\ell'$, where $\ell' < \ell$. Such reconfiguration is achieved simply by changing the inputs of the adders, i.e., the cardinalities, from $\{N_{\mathrm{D}}(i+1-\ell, \ell), \dots, N_{\mathrm{D}}(i, \ell)\}$ to $\{N_{\mathrm{D}}(i+1-\ell', \ell'), \dots, N_{\mathrm{D}}(i, \ell')\}$ through multiplexers to find the symbol contribution $g_i(c_i)$ and storing a second residue table with the redundancy metric $R(m,\ell')$ for $\mathcal{D}^{\mathsf{Res}}_{m,\ell'}$. This is an essential property in order to address device aging due to which homopolymers of shorter lengths become detrimental. 
\end{remark}

\section{Comparisons and Double Substitution Errors}\label{sec:comp_doub}
\subsection{Three Literature Comparisons} \label{sec:compare}
We compare the coding scheme of Nguyen et al.'s \cite{nguyen_etal} and our EC D-LOCO scheme using $\mathcal{D}^{\mathsf{Res}}_{m,2}$ for various $m$ in Table~\ref{table:cmp}. 
See Table~\ref{tableR2} and Fig.~\ref{fig:cmp} for our rates. 
\par
For instance, to design a strand of length $80$, they offer single-edit correction while achieving a rate of $1.3750$ with very low required storage. We achieve the same rate by designing a strand of length $200$ where we can correct one substitution error in every concatanable of length as low as $40$ (where $m =37$) with storage overhead $4$--$5$ times of what their scheme requires and lower maximum run-length. Alternatively, we design a strand of length $400$ where one substitution error in every concatanable segment of length $80$ (where $m=77$) is corrected, achieving a projected rate of $1.6250$ provided that the system can afford a notably higher storage. 
\par
Both coding schemes have quadratic time-complexity and satisfy the biochemical constraints (homopolymer and $GC$-content) required by the storage system. In fact, our DNA strands do not contain homopolymers of length $3$, which may improve error rates, and also we detect/correct double-substitution errors with high probability (see the next section). Additional properties of EC D-LOCO codes include local balance and parallel encoding/decoding (see \cite[Section~VII C--E]{D-LOCO}). A crucial advantage of our scheme is that our design is not limited by strand length, due to its concatenated structure, as long as the adopted sequencer offers sufficient read length. This limits the rate loss due to the labeling of strands. By employing a marker-based reconstruction, our scheme stands as a potential candidate for nanopore sequencer, where the error rates are higher than for its alternatives. See~\cite[Table~I]{jo_NGScomp} for a comparison of next-generation sequencing (NGS) methods. We leave this direction for future work.

Song et al.~\cite{song_delsub} construct $4$-ary codes of length $m$ correcting $s$ substitutions with asymptotic redundancy $(4s-1-\lfloor \frac{2s-1}{4} \rfloor) \log_4\,(m) + O(\log_4\,(m))$. In particular, for $s=1$, their codes have $\frac{3}{2} \log_2(m)+O(\log_2(m))$ redundancy with $O(m^3)$ (resp., $O(m^2)$) systematic encoding (resp., decoding) complexities. EC D-LOCO codes, on the other hand, are constrained, i.e., balanced and free of long homopolymers, and have encoding/decoding algorithms with quadratic, i.e., $O(m^2)$, complexities as well as a list-decoding procedure for double-substitutions that requires at most $O(m^3)$ time. 

Bar-Lev et al.~\cite{barlev_etal} offer a modular scheme that combines a synchronization algorithm based on deep neural networks with an error-correction constrained coding scheme relying on an inner-outer coding approach with additional novel procedures and sophisticated steps. In one of their outer (constrained) codes, they encode $13$-bit data chucks into DNA blocks of length $7$. Imposing the restriction that these blocks do contain runs of length $5$ in the middle and do not contain runs of length $3$ at the edges, they design DNA sequences with maximum run-length $= 4$. In the concatenation process, they address balancing as well and make sure that the encoded sequences have guaranteed $GC$-content $30\%$--$70\%$ ($45\%$--$55\%$ with high probability). As part of the encoding stage, constrained codes are followed by a Bose-Chaudhuri-Hocquenghem (BCH) code as the inner code that can correct $3$ substitutions, with which they achieve an average code rate of $1.6$, taking the rate loss due to indexing/labeling into account. They use syndrome decoding for the BCH code, requiring a storage overhead of few hundreds of megabytes. Despite its advantages, their scheme cannot fully address the biochemical constraints due to the unconstrained segment of parity-check symbols as well as a lack of proper bridging with the (constrained) labels. Our storage overhead is remarkably less than that of their scheme.

\subsection{How to address double-substitution errors} \label{sec:correcting_double_errors}
Consecutive (i.e., burst-type) substitution errors of length $2$ have been shown to occur during DNA synthesis with widely used technologies, and therefore these errors persist in the retrieved DNA sequence even under ideal storage and sequencing \cite{heckel_twin}. In addition to this barrier to reliable decoding, the multiple-read approach of DNA strands, apart from its efficiency drawbacks, cannot be sufficient with limited number of noisy reads (even at low sequencing error rates) to prevent double-substitution errors from showing up in the reconstructed codewords. For these reasons, it is important to address double-substitution errors with high decoding success. We introduce a list-decoding procedure, making use of our residue table, for double-substitution errors, which runs in near-quadratic time in $m$ and at no rate cost. By incorporating this procedure, we complete the residue decoding algorithm under Assumption~1 or Assumption~2 below. Without loss of generality, experimental results below are for $\ell=2$ and lengths $m \in \{23,41,55,61\}$ as well as for $\ell = 3$ and $m \in \{17,23\}$. \\
\textbf{Assumption 2.} Suppose two substitution errors occurred at some symbols of the EC codeword $\mathbf{d}^{\mathrm{c}}$ but not in the following three bridging symbols. Denote the erroneous word by $\mathbf{w}$. 

Let $P_1$ be the probability of $\mathbf{w}$ being a codeword and having a formal index congruent to $0$ or $N(m)-1$, i.e.,
\begin{align} P_1 = \mathbb{P} [\,(g(\mathbf{w}&, \ell) \equiv 0 \textrm{ } (\mathrm{mod} \textup{ } R)) \nonumber \\
& \vee \textrm{ } (g(\mathbf{w}, \ell) \equiv N(m)-1 \textup{ } (\mathrm{mod} \textup{ } R))\,].
\end{align}

\noindent Let $P_2$ be the probability of $\mathbf{w}'$ being in $\mathcal{D}^{\mathsf{Res}}_{m,\ell}$ with the correct check-sum and having Hamming distance $1$ to $\mathbf{w}$, i.e., after one substitution is applied to $\mathbf{w}$, which results in $\mathbf{w}'$ of Hamming distance at most $3$ to $\mathbf{d}^{\mathrm{c}}$.

Experimental results show that $P_1+P_2 < 0.0005$, i.e., such a word $\mathbf{w}'$ leaves the purple diamond with a ``No'' with probability at least $99.95\%$, allowing us to reach the conclusion that the inputted word $\mathbf{w}$ has 
double-substitution errors. The upper bound for $P_1 + P_2$ applies regardless of whether the encoding codeword is initially 
complemented or not due to EC balancing. Here are the algorithm updates.

After this ``No'', complement $\mathbf{w}$ back, if needed, based on the trusted first bridging symbol to get $\mathbf{w}^{\mathrm{c}}$, and then follow up with the list-decoding step below, namely Step~D4.
  
\begin{itemize}
\item[D4.] 
\begin{itemize}
\item[(1)] List all words $\mathbf{u}$ that are of Hamming distance $1$ to the word $\mathbf{w}^{\mathrm{c}}$ (simply by introducing all single substitutions to $\mathbf{w}^{\mathrm{c}}$), at least one of which is of Hamming distance $1$ to $\mathbf{d}$.
\item[(2)] For each listed word $\mathbf{u}$, find its formal residual index $(g(\mathbf{u},\ell)\textup{ } \mathrm{mod} \textup{ } R)$.
\item[(3)] For each listed $\mathbf{u}$, if codeword, apply Step~OD2; otherwise apply Case 2 in Step~OD1. If the resulting word $\mathbf{u}'$ is an EC codeword with the correct check-sum (equal to that of $\mathbf{d}$) and of Hamming distance $2$ to $\mathbf{w}^{\mathrm{c}}$, include it in the decoding list $\mathcal{L}$. Proceed with Step~D5.
\end{itemize}
\end{itemize}
\begin{itemize}
\item[D5.] If $|\mathcal{L}|=1$, output the index $g(\mathbf{u}')$ of the unique $\mathbf{u}' \in \mathcal{L}$. If not, output one element index in $\mathcal{L}$ at random. 
\end{itemize}
Note that $\mathbf{d}$ must be in $\mathcal{L}$. Note also that the above procedure is more likely to succeed (at a lower cost) than listing~(brute force) all D-LOCO codewords that are (i) of Hamming distance $2$ to $\mathbf{w}^{\mathrm{c}}$, (ii) with the check-sum equal to that of $\mathbf{d}$, and (iii) with zero residual index. 

The \textit{relative} time-complexity of the list-decoding step to the correction of single substitutions is $O(m)$ as up to $3m$ binary searches and index corrections, which are of $O(m^2)$, are required. In particular, the complexity of Step~D4 is $O(m\cdot m\log(R(m)))$, which is $O(m^2\log(m))$, based on the polynomial modeling of $R(m)$. The complexity of Step~D5 depends on the behavior of the cardinality $|L|$ as $m \rightarrow \infty$. The overall asymptotic complexity of our list-decoding procedure is at most $O(m^3)$, if $|L|=O(m)$, as then D5 dominates D4 in terms of complexity, and is expected to be $O(m^2 \cdot \log(m))$, if $|L|=O(1)$, as then D4 dominates instead.

The conditional probability of brute-force listing for a random $\mathbf{w}$ is experimentally found. In the experiment, we pick random binary data and apply double-substitution errors to the encoding EC codeword $\mathbf{d}$ in a manner that the erroneous word $\mathbf{w}$ is not an EC codeword. We then list all EC codewords that are (i) of Hamming distance $2$ to $\mathbf{w}$ and (ii) with the check-sum equal to the local detection check-sum of $\mathbf{d}$. This is repeated $5000$ ($4000$) times for $m \in \{23, 41, 55\}$ ($m=61$) in case $\ell = 2$, and $10000$ times for $m \in \{17, 23\}$ in case $\ell = 3$. Since Step~D4 is more likely to succeed than brute-force listing, the success rate of our list-decoding procedure is at least $98.25\%$ ($99.4\%$) for $\ell = 2$ ($\ell = 3$) independent of the specified lengths $m$. In fact, these percentages go up to $99.1\%$ and $99.7\%$, respectively, after random outputting as the lists of size $2$ are experimentally dominant in the case of failed decoding for both $\ell=2,3$. 


\begin{figure}[ht!]
\vspace{-1.25em}
\center
\includegraphics[trim={0.0in 0.0in 0.0in 0.0in}, width=3.6in]{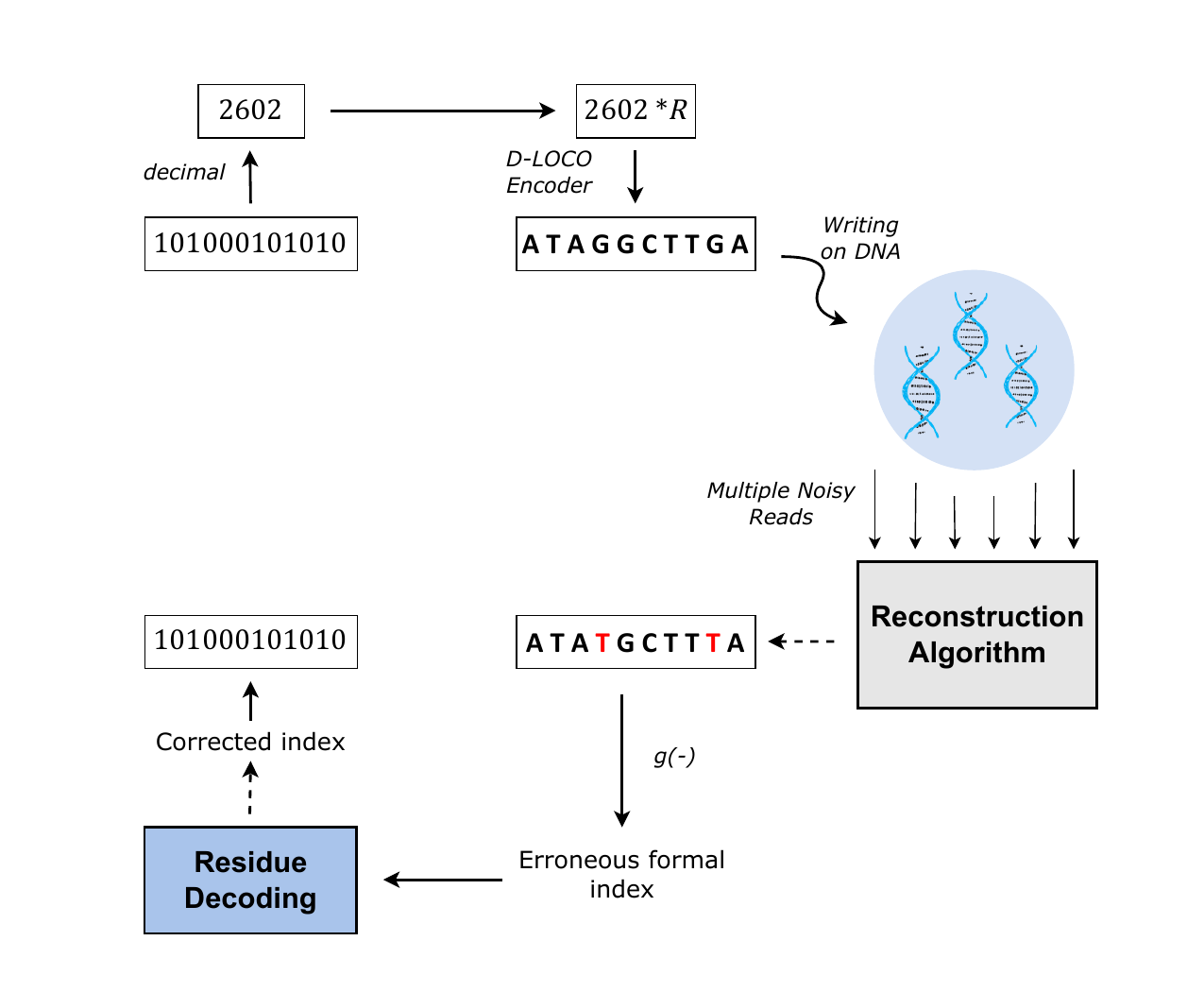}
\vspace{-2em}
\caption{Panorama of the data retrieval process to illustrate how our error-correction algorithm would fit into it once preceded by a reconstruction algorithm. The dashed line following the reconstruction algorithm indicates an outcome of correct length with high probability, which our decoder effectively handles.}
\label{fig:pano}
\vspace{-0.5em}
\end{figure}

\section{Potential of Our Scheme and Future Directions}

Our coding scheme is to be completed by adopting a reconstruction algorithm in order to handle deletions and insertions as well, which also follows the literature trend of multiple reads of DNA strands succeeded by a reconstruction process. Fig.~\ref{fig:pano} depicts a block diagram of the system we assume. A crucial property required from a suitable reconstruction algorithm to complement EC D-LOCO codes is the efficacy to recover a strand from its multiple noisy reads such that
\begin{itemize}
\item[(a)] substitutions are allowed but with correct length (as deletions/insertions are becoming new substitutions), and
\item[(b)] the efficiency of the overall scheme should not be negatively affected. 
\end{itemize}

This problem is, however, beyond the scope of this paper, and we instead cite relevant literature works \cite{milenkovic_trace}, \cite{sabary_rec}, \cite{barlev_etal}. By employing a marker-based version of available reconstruction algorithms in the literature (based on using more bridging symbols as markers), our scheme would enable the reconstruction of the (possibly erroneous) EC codewords in a parallel manner. This parallelism offers low latency, while reconstruction of individual codewords (in place of the whole strand at once) decreases the time-complexity of the adopted reconstruction algorithm that depends on the length of the input sequence (and the number of noisy reads as well). Handling substitution errors effectively via our proposed codes can allow us to prioritize deletions and insertions during the reconstruction process due to (a), and thus achieve full recovery of data with fewer number of reads, also decreasing the time-complexity of decoding. Such potential update and usage of our scheme is to be assessed in future work.\footnote{Near-future tasks are to search for ways to further maximize EC D-LOCO rates and minimize their storage overhead as well as to further polish the analysis of success rates for correcting double substitutions (occurring within strand segments). A theoretical analysis of the asymptotic behavior of the redundancy metric $R(m)$ is an additional key action item too.}

\section{Conclusion}

We introduced EC D-LOCO codes as subcodes of D-LOCO codes with guaranteed correction of single-substitution errors and probabilistic (highly likely) correction of double-substitution errors by adding redundancy via using only D-LOCO codewords with indices having certain property. The exhaustive set of index differences due to single substitutions was derived for generic code lengths and a fixed maximum run-length in $\{1,2,3\}$. The adopted redundancy metric for guaranteed correction of single-substitution errors was obtained via this set, and its modeling led to the projection that EC D-LOCO codes approach the constrained-system capacity. Future work includes further developing our coding solution so that it can address deletion and insertion errors as well.


\section*{Acknowledgment}

The authors would like to thank Yusuf Uslan and Arda D\"uzta\c{s} for their assistance in carrying out this research. Furthermore, they would like to thank the Associate Editor Prof. Reza Asvadi for his effective handling of the article, and thank the anonymous reviewers for their constructive comments.



\end{document}